\documentclass[12pt,journal,onecolumn]{IEEEtran}
%\documentclass[journal,onecolumn]{IEEEtran}
%
% If IEEEtran.cls has not been installed into the LaTeX system files,
% manually specify the path to it like:
% \documentclass[journal]{../sty/IEEEtran}

\usepackage{amssymb}
\usepackage{amsmath}
\usepackage{longtable}
\newtheorem{theorem}{Theorem}[section]
\newtheorem{lemma}[theorem]{Lemma}

\newtheorem{example}{Example}

\newcommand\myatop[2]{\genfrac{}{}{0pt}{}{#1}{#2}}

\usepackage[para]{threeparttable}

\newcommand{\tr}{{\mathrm{Tr}}}

\newcommand{\gf}{{\mathrm{GF}}}
\newcommand{\PG}{{\mathrm{PG}}}

\newcommand{\bC}{{\mathbb{C}}}

\newcommand{\C}{{\mathcal{C}}}
\newcommand{\cA}{{\mathcal{A}}}

\newcommand{\bc}{{\mathbf{c}}}

\newcommand{\bg}{{\mathbf{g}}}

\newcommand{\bx}{{\mathbf{x}}}

\newenvironment{proof}[1][Proof]{\begin{trivlist}
\item[\hskip \labelsep {\bfseries #1}]}{\end{trivlist}}

%\newenvironment{remark}[1]
%[Remark]{\begin{trivlist}
%\item[\hskip \labelsep {\bfseries #1}]}{\end{trivlist}}

\makeatletter

\newcommand{\Rmnum}[1]{\expandafter\@slowromancap\romannumeral #1@}
\makeatother

\ifCLASSINFOpdf

\else

\fi

\hyphenation{op-tical net-works semi-conduc-tor}

\begin{document}
%
% paper title
% can use linebreaks \\ within to get better formatting as desired
\title{Optimal Binary Linear Codes from Maximal Arcs}

\author{Ziling Heng\thanks{Z. Heng is with the School of Science, Chang'an University, Xi'an 710064, China (email: zilingheng@163.com)},
Cunsheng Ding\thanks{C. Ding is with the Department of Computer Science
                           and Engineering, The Hong Kong University of Science and Technology,
Clear Water Bay, Kowloon, Hong Kong, China (email: cding@ust.hk)},
and
Weiqiong Wang\thanks{W. Wang is with the School of Science, Chang'an University, Xi'an 710064, China  (e-mail:
wqwang@chd.edu.cn).}
}

\maketitle

\begin{abstract}
The binary Hamming codes with parameters $[2^m-1, 2^m-1-m, 3]$ are perfect.
Their extended codes have parameters $[2^m, 2^m-1-m, 4]$ and are distance-optimal.
The first objective of this paper is to construct a class of binary linear codes
with parameters $[2^{m+s}+2^s-2^m,2^{m+s}+2^s-2^m-2m-2,4]$, which have better information rates than
the class of extended binary Hamming codes, and are also distance-optimal.
The second objective is to construct a class of distance-optimal binary codes
with parameters $[2^m+2, 2^m-2m, 6]$. Both classes of binary linear codes have
new parameters.
\end{abstract}
% IEEEtran.cls defaults to using nonbold math in the Abstract.
% This preserves the distinction between vectors and scalars. However,
% if the journal you are submitting to favors bold math in the abstract,s
% then you can use LaTeX's standard command \boldmath at the very start
% of the abstract to achieve this. Many IEEE journals frown on math
% in the abstract anyway.

% Note that keywords are not normally used for peerreview papers.
\begin{IEEEkeywords}
Denniston arc, \and linear code, \and subfield code, \and subfield subcode
\end{IEEEkeywords}

% For peer review papers, you can put extra information on the cover
% page as needed:
% \ifCLASSOPTIONpeerreview
% \begin{center} \bfseries EDICS Category: 3-BBND \end{center}
% \fi
%
% For peerreview papers, this IEEEtran command inserts a page break and
% creates the second title. It will be ignored for other modes.
\IEEEpeerreviewmaketitle

\section{Introduction}

Let $q$ be a prime power and $\gf(q)$ the finite field with $q$ elements. Let $n, k, d$ be positive integers.
An $[n,\, k,\, d]$ \emph{code} $\C$ over $\gf(q)$ is a $k$-dimensional subspace of $\gf(q)^n$ with minimum (Hamming) distance $d$. The information rate of $\C$ is defined as $k/n$.
Let $A_i$ denote the number of codewords with Hamming weight $i$ in a code
$\C$ of length $n$. The {\em weight enumerator} of $\C$ is defined by
$
1+A_1z+A_2z^2+ \cdots + A_nz^n.
$
The sequence $(1, A_1, A_2, \cdots, A_n)$ is called the \emph{weight distribution} of
the code $\C$.
A code $\C$ is said to be a $t$-weight code  if the number of nonzero
$A_i$ in the sequence $(A_1, A_2, \cdots, A_n)$ is equal to $t$.

The \emph{dual code} of an $[n,\, k,\, d]$ \emph{code} $\C$ over $\gf(q)$, denoted by $\C^\perp$,
is defined by
$$
\C^\perp:=\{\bx \in \gf(q)^n: \bx \cdot \bc = 0 \ \forall \ \bc \in \C\},
$$
where $\bx \cdot \bc$ denotes the standard inner product of the two vectors.
The dual $\C^\perp$ has dimension $n-k$. The minimum distance of $\C^\perp$ is called the
\emph{dual distance} of $\C$. The \emph{extended code} of an $[n,k,d]$ linear code $C$ is defined by
$$\overline{\C}=\left\{(c_1,c_2,\cdots,c_{n+1}):(c_1,c_2,\cdots,c_n)\in \C\mbox{ with }\sum_{i=1}^{n+1}c_i=0\right\}.$$
Then $\overline{\C}$ is an $[n+1,k,\overline{d}]$ code where $\overline{d}=d$ or $d+1$.

An $[n,k,d]$ code over $\gf(q)$ is said to be \emph{distance-optimal} if no $[n,k,d+1]$ code over $\gf(q)$ exists and \emph{almost distance-optimal} if there exists an $[n,k,d+1]$ distance-optimal code over $\gf(q)$.
An $[n,k,d]$ code over $\gf(q)$ is said to be \emph{dimension-optimal} if no $[n,k+1,d]$ code over $\gf(q)$ exists. A code is \emph{optimal} if the parameters of the code meet a bound on
linear codes. Optimal codes are interesting in both theory and practice. The well known \emph{sphere-packing bound} of a $q$-ary $(n,M,d)$ code with $M$ codewords is given by
$$q^n\geq M\sum_{i=0}^{\lfloor\frac{d-1}{2}\rfloor}(q-1)^i\binom{n}{i},$$ where $\lfloor\cdot\rfloor$ denotes the floor function.
An $(n,M,d)$ code is said to be \emph{prefect} if its parameters achieve the sphere-packing bound.
 The only infinite family of perfect binary linear codes are the binary Hamming codes with parameters $[2^m-1, 2^m-1-m, 3]$. The extended binary Hamming codes have parameters $[2^m, 2^m-1-m, 4]$ and are distance-optimal. The motivation of this paper is to search for a class of binary linear codes which are better than the extended binary Hamming codes. The first objective is to present
a class of binary linear codes with parameters $[2^{m+s}+2^s-2^m,2^{m+s}+2^s-2^m-2m-2,4]$, which have better information rates
than the class of extended binary Hamming codes. The second objective of this paper is to
construct a class of distance-optimal binary linear codes with parameters $[2^m+2, 2^m-2m, 6]$.
To this end, subfield, extension and augmentation techniques are employed.

\section{Preliminaries}\label{sect-pre}

\subsection{Group characters and character sums}
Now we recall characters and some character sums over finite fields which will be needed later.

Let $p$ be a prime and $q=p^m$. Let $\gf(q)$ be the finite field with $q$ elements and $\alpha$ a primitive element of $\gf(q)$. The trace function $\tr_{q/p}$
is the homomorphism from $\gf(q)$ onto $\gf(p)$ defined by
$$\tr_{q/p}(x)=\sum_{i=0}^{m-1}x^{p^{i}},\ x\in \gf(q).$$ Denote by $\zeta_p$ the primitive $p$-th root of complex unity.

An \emph{additive character} of $\gf(q)$ is a function $\chi$ from the additive group $(\gf(q),+)$ to the multiplicative group $\bC^{*}$ such that
$$\chi(x+y)=\chi(x)\chi(y),\ x,y\in \gf(q),$$ where $\bC^{*}$ denotes the set of all nonzero complex numbers. For any $a\in \gf(q)$, the function
$$\chi_{a}(x)=\zeta_{p}^{\tr_{q/p}(ax)},\ x\in \gf(q),$$ defines an additive character of $\gf(q)$. In addition, $\{\chi_{a}:a\in \gf(q)\}$ is a group containing all the additive characters of $\gf(q)$. It is clear that $\chi_0(x)=1$ for all $x\in \gf(q)$ and $\chi_0$ is referred to as the trivial additive character of $\gf(q)$. If $a=1$, we call $\chi_1$ the canonical additive character of $\gf(q)$. Clearly, $\chi_a(x)=\chi_1(ax)$. The orthogonality  relation of additive characters is given by
$$\sum_{x\in \gf(q)}\chi_1(ax)=\left\{
\begin{array}{rl}
q    &   \mbox{ for }a=0,\\
0    &   \mbox{ for }a\in \gf(q)^*.
\end{array} \right. $$

Let $\gf(q)^*=\gf(q)\setminus \{0\}$. A \emph{character} $\psi$ of the multiplicative group $\gf(q)^*$ is a homomorphism from  $\gf(q)^*$  to $\bC^{*}$ satisfying $\psi(xy)=\psi(x)\psi(y)$ for all $(x,y)\in \gf(q)^*\times \gf(q)^*$. The multiplication of two characters $\psi,\psi'$ is defined by $(\psi\psi')(x)=\psi(x)\psi'(x)$ for $x\in \gf(q)^*$. All the characters of $\gf(q)^*$ can be given by
$$\psi_{j}(\alpha^k)=\zeta_{q-1}^{jk}\mbox{ for }k=0,1,\cdots,q-1,$$
where $0\leq j \leq q-2$. Then $\{\psi_j:0\leq j \leq q-2\}$ is a group under the multiplication of characters and its elements are called \emph{multiplicative characters} of $\gf(q)$. In particular, $\psi_0$ is called the trivial multiplicative character of  $\gf(q)$. The orthogonality relation of multiplicative characters is given by
$$\sum_{x\in \gf(q)^*}\psi_j(x)=\left\{
\begin{array}{rl}
q-1    &   \mbox{ for }j=0,\\
0    &   \mbox{ for }j\neq 0.
\end{array} \right. $$

Let $\chi$ be a nontrivial additive character of $\gf(q)$ and $f\in \gf(q)[x]$ a polynomial of positive degree. \emph{Weil sums} are a special class of character sums in the form
$$\sum_{c\in \gf(q)}\chi(f(c)).$$ The problem of evaluating
such character sums explicitly is very difficult in general. However, Weil sums can be treated in some special cases (see \cite[Section 4 in Chapter 5]{LN}).

If $f$ is an affine $p$-polynomial over $\gf(q)$, the Weil sums can be evaluated explicitly.

\begin{lemma}\label{lem-p-polynomial}\cite[Theorem 5.34]{LN}
Let $q=p^m$ and let
$$f(x)=a_rx^{p^r}+a_{r-1}x^{p^{r-1}}+\cdots+a_1x^{p}+a_0x+a$$ be an affine $p$-polynomial over $\gf(q)$. Let $\chi_b$ be a nontrivial additive character of $\gf(q)$ with $b\in \gf(q)^*$. Then
$$\sum_{c\in \gf(q)}\chi_b(f(c))=\left\{\begin{array}{ll}
\chi_b(a)q    &   \mbox{ if }ba_r+b^pa_{r-1}^p+\cdots+b^{p^{r-1}}a_{1}^{p^{r-1}}+b^{p^{r}}a_{0}^{p^{r}}=0,\\
0    &   \mbox{ otherwise. }
\end{array} \right.$$
\end{lemma}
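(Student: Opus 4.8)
The plan is to reduce the statement to the orthogonality relation for additive characters recalled in Section~\ref{sect-pre}. First decompose $f(x) = L(x) + a$, where $L(x) = a_r x^{p^r} + a_{r-1} x^{p^{r-1}} + \cdots + a_1 x^p + a_0 x$ is the homogeneous linearized part. Since $\chi_b(f(c)) = \chi_b(a)\,\chi_b(L(c))$, it suffices to evaluate $S := \sum_{c \in \gf(q)} \chi_b(L(c))$ and then multiply the result by $\chi_b(a)$.

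The key observation is that $c \mapsto \chi_b(L(c))$ is itself an additive character of $\gf(q)$: each map $x \mapsto x^{p^i}$ is additive (a power of the Frobenius), so $L$ is an additive endomorphism of $\gf(q)$, and hence $\chi_b(L(c+c')) = \chi_b(L(c))\,\chi_b(L(c'))$ for all $c, c' \in \gf(q)$. By the orthogonality relation for additive characters, $S = q$ when this character is trivial and $S = 0$ otherwise. Writing $\chi_b(L(c)) = \zeta_p^{\tr_{q/p}(b L(c))}$, this character is trivial precisely when $\tr_{q/p}(bL(c)) = 0$ for every $c \in \gf(q)$.

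It remains to put this condition into the form stated in the lemma. For each $i$, invariance of the trace under the Frobenius gives $\tr_{q/p}(z) = \tr_{q/p}(z^{p^{r-i}})$; applying this with $z = b a_i c^{p^i}$ yields $\tr_{q/p}(b a_i c^{p^i}) = \tr_{q/p}\bigl(b^{p^{r-i}} a_i^{p^{r-i}} c^{p^r}\bigr)$. Summing over $0 \le i \le r$ and using additivity of the trace, $\tr_{q/p}(bL(c)) = \tr_{q/p}(B\, c^{p^r})$, where $B = b a_r + b^p a_{r-1}^p + \cdots + b^{p^{r-1}} a_1^{p^{r-1}} + b^{p^r} a_0^{p^r}$ is exactly the expression appearing in the statement. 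As $c$ runs over $\gf(q)$ so does $c^{p^r}$, and the bilinear form $(x,y) \mapsto \tr_{q/p}(xy)$ is nondegenerate, so $\tr_{q/p}(B c^{p^r})$ vanishes identically in $c$ if and only if $B = 0$. Combining this with the dichotomy for $S$ gives the two cases of the lemma.

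I expect the only delicate point to be the bookkeeping in the last step: choosing the Frobenius power $r-i$ for the $i$-th summand so that every monomial $c^{p^i}$ is simultaneously transformed into $c^{p^r}$, and checking that the resulting coefficients assemble into exactly $B$ rather than some twisted variant of it. Everything else — additivity of $L$, orthogonality of additive characters, bijectivity of the Frobenius, and nondegeneracy of the trace form — enters only in a routine way.
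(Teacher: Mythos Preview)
Your argument is correct. The paper does not supply its own proof of this lemma; it is quoted directly from \cite[Theorem~5.34]{LN}, so there is nothing in the paper to compare against. For what it is worth, your approach---pulling off the constant term, recognising $c\mapsto\chi_b(L(c))$ as an additive character, and then using Frobenius invariance of the trace to collapse all monomials to a single power $c^{p^r}$---is exactly the standard proof given in Lidl--Niederreiter, including the bookkeeping you flag in the last paragraph.
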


Let $q=2^m$. The value of another class of Weil sums defined by
$$S_{h}(a,b)=\sum_{x\in \gf(q)}\chi_1(f(x)).$$
When $f(x)=ax^{2^h+1}+bx$, $a,b\in \gf(q)$, this sum was determined by Coulter in 1999
and is given below.

\vspace{.1cm}

\begin{lemma}\label{lemma-coulter}\cite{C}
Let $q=2^m$, $\alpha$ be a primitive element of $\gf(q)$ and $f(x)=ax^{2^h+1}+bx$, $a,b\in \gf(q)$. Let $e=\gcd(m,h)$ and $a\in \gf(q)^*$.
\begin{enumerate}
\item Let $m/e$ be odd. Then $S_{h}(a,0)=0$. If  $b\in \gf(q)^*$, then $S_{h}(a,b)=S_{h}(1,bc^{-1})$ where $c\in \gf(q)^*$ is the unique element satisfying $c^{2^h+1}=a$ and
    $$S_{h}(1,b)=\left\{\begin{array}{ll}
0    &   \mbox{ if }\tr_{2^m/2^e}(b)\neq 1,\\
\pm 2^{\frac{m+e}{2}}   &   \mbox{ if }\tr_{2^m/2^e}(b)=1.
\end{array} \right.$$
\item Let $m/e$ be even. If $b=0$, we have
$$S_{h}(a,0)=\left\{\begin{array}{ll}
-(-1)^{\frac{m}{2e}} 2^{\frac{m}{2}+e}    &   \mbox{ if }a=\alpha^{t(2^e+1)}\mbox{ for some integer }t,\\
(-1)^{\frac{m}{2e}} 2^{\frac{m}{2}}   &   \mbox{ if }a\neq\alpha^{t(2^e+1)}\mbox{ for any integer }t.
\end{array} \right.$$
If $b\in \gf(q)^*$, there are two cases as follows.
\begin{enumerate}
\item If $a\neq\alpha^{t(2^e+1)}$ for any integer $t$, then $g(x)=a^{2^h}x^{2^{2h}}+ax$ is a permutation polynomial. Let $x_0\in \gf(q)$ be the unique solution of $g(x_0)=b^{2^h}$. Then
    $$S_h(a,b)=(-1)^{\frac{m}{2e}}2^{\frac{m}{2}}\chi_1(ax_0^{2^h+1}).$$
\item If $a=\alpha^{t(2^e+1)}$ for some integer $t$, then $S_h(a,b)=0$ unless $g(x)=a^{2^h}x^{2^{2h}}+ax=b^{2^h}$ is solvable. If $g(x_0)=b^{2^h}$ has a solution $x_0$, then
    $$S_{h}(a,0)=\left\{\begin{array}{ll}
-(-1)^{\frac{m}{2e}}2^{\frac{m}{2}+e}\chi_1(ax_0^{2^h+1})   &   \mbox{ if }\tr_{2^m/2^e}\neq 0,\\
(-1)^{\frac{m}{2e}}2^{\frac{m}{2}}\chi_1(ax_0^{2^h+1})   &   \mbox{ if }\tr_{2^m/2^e}\neq 0.
\end{array} \right.$$
\end{enumerate}
\end{enumerate}
\end{lemma}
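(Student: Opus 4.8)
The plan is to follow the classical route for Weil sums of this type: first evaluate $S_h(a,b)^2$ by a substitution argument, reducing everything to a character sum over the kernel of a linearized polynomial, and then recover the sign of $S_h(a,b)$ from the theory of quadratic forms over $\gf(2)$. Since $\tr_{2^m/2}(ax^{2^h+1}+bx)\in\gf(2)$ for every $x$, each summand $\chi_1(ax^{2^h+1}+bx)=\zeta_2^{\tr_{2^m/2}(ax^{2^h+1}+bx)}$ equals $\pm1$, so $S_h(a,b)$ is a rational integer and $S_h(a,b)^2=|S_h(a,b)|^2$. Writing $S_h(a,b)^2=\sum_{x,y\in\gf(q)}\chi_1(a(x^{2^h+1}+y^{2^h+1})+b(x+y))$ and substituting $x=y+z$, the characteristic-$2$ identity $a(y+z)^{2^h+1}+ay^{2^h+1}=az^{2^h+1}+a(y^{2^h}z+yz^{2^h})$ splits the exponent into a part depending only on $z$ and a part that, under $\tr_{2^m/2}$, is linear in $y$ with coefficient $L_a(z):=az^{2^h}+a^{2^{m-h}}z^{2^{m-h}}$. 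Summing over $y$ by orthogonality gives $q$ when $L_a(z)=0$ and $0$ otherwise, and since $L_a(z)^{2^h}=g(z)$ with $g(x)=a^{2^h}x^{2^{2h}}+ax$ one obtains
$$S_h(a,b)^2=q\sum_{z\in\ker g}\chi_1(az^{2^h+1}+bz),$$
where $\ker g$ is exactly the radical of the symmetric bilinear form $B(x,y)=\tr_{2^m/2}(a(x^{2^h}y+xy^{2^h}))$ polarizing the quadratic form $Q(x)=\tr_{2^m/2}(ax^{2^h+1})$.

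Next I would determine $\dim_{\gf(2)}\ker g$. The nonzero roots of $g$ satisfy $z^{2^{2h}-1}=a^{1-2^h}$, so $\ker g$ is nontrivial, of dimension $\gcd(2h,m)$, precisely when $a^{1-2^h}$ lies in the image of the $(2^{2h}-1)$-th power map of $\gf(q)^*$; a short computation with $e=\gcd(m,h)$ gives $\gcd(2h,m)=e$ if $m/e$ is odd and $\gcd(2h,m)=2e$ if $m/e$ is even, and shows that in the odd case the solvability condition always holds (since $(2^e-1)\mid(2^h-1)$), whereas in the even case it holds if and only if $a\in\{\alpha^{t(2^e+1)}\}$, the subgroup of $(2^e+1)$-th powers of $\gf(q)^*$. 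On the radical $\ker g$ the form $z\mapsto\tr_{2^m/2}(az^{2^h+1}+bz)$ is $\gf(2)$-linear, so the inner sum above equals $|\ker g|$ when this functional is identically zero on $\ker g$ and is $0$ otherwise; one checks that its vanishing on $\ker g$ is equivalent to the solvability of $g(x)=b^{2^h}$, and in that case, if $g(x_0)=b^{2^h}$, completing the square via $x_0$ yields $S_h(a,b)=\chi_1(ax_0^{2^h+1})\,S_h(a,0)$. Assembling these facts already produces the magnitudes $0$, $2^{m/2}$, $2^{(m+e)/2}$, $2^{m/2+e}$ in the statement and accounts for all the side conditions ($S_h(a,0)=0$ when $m/e$ is odd, the $\alpha^{t(2^e+1)}$ dichotomy when $m/e$ is even, and the solvability requirement on $g(x)=b^{2^h}$).

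The remaining task, which is the main obstacle, is to pin down the signs. In every case where the inner sum does not vanish, $|S_h(a,b)|=\sqrt{q\,|\ker g|}\in\{2^{m/2},2^{(m+e)/2},2^{m/2+e}\}$, and its sign is the Arf-invariant sign of the nondegenerate quadratic form obtained by restricting $Q$ to a complement of $\ker g$ (a form of even rank $m-\dim_{\gf(2)}\ker g$), up to an explicit phase coming from completing the square in the linear term when $b\neq0$; by the classification of quadratic forms over $\gf(2)$ this sign is $+$ for hyperbolic type and $-$ for elliptic type. What must be shown is that the type is governed exactly by the stated arithmetic condition on $a$ --- concretely, that $Q$ is of elliptic type precisely in the $-(-1)^{m/2e}$ cases --- together with the identification of the global factor $(-1)^{m/2e}$ and, when $m/e$ is odd, the fact that $Q$ does not vanish identically on $\ker g$ (which forces $S_h(a,0)=0$ while still permitting $|S_h(a,b)|=2^{(m+e)/2}$ for suitable $b$). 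I would establish these either by the Gauss-sum route --- writing $S_h(a,0)=\sum_{c\in\gf(q)}N(c)\chi_1(ac)$ with $N(c)=\#\{x\in\gf(q):x^{2^h+1}=c\}$, expanding $N(c)$ over the multiplicative characters $\psi$ with $\psi^{2^h+1}$ trivial so that $S_h(a,0)$ becomes a short sum of Gauss sums $\overline{\psi}(a)G(\psi)$, and evaluating these by the Davenport--Hasse relation and Stickelberger's theorem --- or by an explicit diagonalisation of $Q$ over the subfield $\gf(2^e)$, on which $z\mapsto z^{2^h+1}$ carries a norm-type structure; the degenerate subcase $b\neq0$ with $a=\alpha^{t(2^e+1)}$ then follows from the completing-the-square reduction of the previous paragraph. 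All the genuine difficulty lies in this sign bookkeeping and in the boundary cases; the rest is the formal reduction described above.
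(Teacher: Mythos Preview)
The paper does not prove this lemma: it is quoted verbatim from Coulter~\cite{C} and used as a black box in the dimension computation of Lemma~\ref{lemma-dimension}. So there is no ``paper's own proof'' to compare against.

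That said, your sketch is exactly the standard route and is essentially what Coulter does in the cited source: square the sum, substitute $x=y+z$, collapse the $y$-sum by orthogonality onto the kernel of the linearised polynomial $g(x)=a^{2^h}x^{2^{2h}}+ax$, read off $|\ker g|$ from $\gcd(2h,m)$ together with the solvability of $z^{2^{2h}-1}=a^{1-2^h}$, and then recover the sign from the Arf/type invariant of the quadratic form $Q(x)=\tr_{2^m/2}(ax^{2^h+1})$ (equivalently, via the Gauss-sum expansion over characters trivial on $(2^h+1)$-th powers). Your identification of the ``main obstacle'' as the sign bookkeeping is accurate; the magnitudes and the solvability/permutation dichotomies fall out of the squaring argument exactly as you describe, and the completing-the-square reduction $S_h(a,b)=\chi_1(ax_0^{2^h+1})S_h(a,0)$ when $g(x_0)=b^{2^h}$ is the right way to transport the $b=0$ sign to general $b$. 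One small point worth tightening: your claim that the restriction of $z\mapsto\tr_{2^m/2}(az^{2^h+1}+bz)$ to $\ker g$ is $\gf(2)$-linear needs the observation that on $\ker g$ the quadratic part $\tr_{2^m/2}(az^{2^h+1})$ is itself additive (indeed, $B(z_1,z_2)=0$ for $z_1,z_2\in\ker g$ since $\ker g$ is the radical), which you implicitly use but do not state.

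Finally, note that the lemma as printed in this paper contains evident transcription errors in part~2(b) (both branches carry the condition ``$\tr_{2^m/2^e}\neq 0$'' and the left-hand side is written $S_h(a,0)$ rather than $S_h(a,b)$); your proof plan would of course produce the correct statement, with the two branches distinguished by whether the relevant trace is zero or not.
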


We will need these lemmas in later sections.

\subsection{Subfield codes}

Let $q$ be a power of a prime and $m$ a positive integer. Let $\C$ be an $[n,k]$ linear code over the finite field $\gf(q^m)$. Now we construct a new $[n, k']$ code $\C^{(q)}$ over $\gf(q)$ as follows. Let $G$ be a generator matrix of $\C$. Take a basis of $\gf(q^m)$ over $\gf(q)$. Represent each entry of $G$ as an $m \times 1$ column vector of $\gf(q)^m$ with respect to this basis, and replace each entry of $G$ with the corresponding $m \times 1$ column vector of $\gf(q)^m$. With this method, $G$ is modified into a $km \times n$ matrix over $\gf(q)$ generating the new \emph{subfield code} $\C^{(q)}$ over $\gf(q)$ with length $n$.
It is known that the subfield code $\C^{(q)}$ is independent of both the choice of the basis of $\gf(q^m)$ over $\gf(q)$ and the choice of the generator matrix $G$ of $\C$ (see Theorems 2.1 and 2.6 in \cite{DH}).

By definition, the dimension $k'$ of $\C^{(q)}$ satisfies $k'\leq mk$. To the best of our knowledge, the only references
on subfield codes are \cite{CCD, CCZ, DH, HD}. Recently, some basic
results about subfield codes were derived and the subfield codes of ovoid codes were studied
in \cite{DH}. It was demonstrated that the subfield codes of ovoid codes are very attractive
\cite{DH}. The parameters of some hyperoval codes and the conic codes were also studied in \cite{HD}.

For a linear code $\C$ over $\gf(q^m)$, we denote by $\C^{\bot}$ the dual code of $\C$. A relationship between the minimal distance of $\C^{\bot}$  and that of $\C^{(q)\bot}$ is given as follows.
\begin{lemma}\label{th-dualdistance}\cite[Theorem 2.7]{DH}
The minimal distance $d^\perp$ of $\C^{\bot}$ and the minimal distance $d^{(q)\perp}$ of $\C^{(q)\bot}$ satisfy
$$d^{(q)\perp}\geq d^\perp.$$
\end{lemma}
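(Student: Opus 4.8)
The statement to be proved is Lemma~\ref{th-dualdistance}, namely that if $\C$ is a linear code over $\gf(q^m)$ and $\C^{(q)}$ its subfield code over $\gf(q)$, then the minimum distances satisfy $d^{(q)\perp}\geq d^\perp$, where $d^\perp$ is the minimum distance of $\C^\perp$ and $d^{(q)\perp}$ is the minimum distance of $\C^{(q)\perp}$.

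The plan is to argue via parity-check matrices and the standard characterization of minimum distance: a code with parity-check matrix $H$ has minimum distance $\geq d$ if and only if every set of $d-1$ columns of $H$ is linearly independent over the relevant field. First I would fix a basis $\{\beta_1,\dots,\beta_m\}$ of $\gf(q^m)$ over $\gf(q)$ and recall, from the construction in the excerpt, that a generator matrix $G$ of $\C$ over $\gf(q^m)$ yields a generator matrix $G^{(q)}$ of $\C^{(q)}$ over $\gf(q)$ by expanding each entry into an $m\times 1$ column over $\gf(q)$. Equivalently — and this is the form I actually want — the parity-check matrix of $\C^{(q)}$ can be taken to be the matrix obtained from a parity-check matrix $H$ of $\C$ by an analogous field-expansion of its rows, so that a vector $\bx\in\gf(q)^n$ lies in $\C^{(q)\perp}$ precisely when it is orthogonal over $\gf(q^m)$ to every row of $H$ (viewing $\bx$ as a vector over the extension field). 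The key observation is then: $\C^{(q)\perp}=\C^\perp\cap\gf(q)^n$, i.e. the subfield subcode of $\C^\perp$. This identity is the crux; it follows because $\bx\in\gf(q)^n$ is $\gf(q^m)$-orthogonal to all of $\C$ if and only if it is $\gf(q)$-orthogonal to the expanded rows (expanding the inner product $\bx\cdot\bc$ over the basis $\{\beta_i\}$, each $\gf(q^m)$-linear equation becomes $m$ independent $\gf(q)$-linear equations, and these are exactly the defining equations of $\C^{(q)\perp}$).

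Granting the identity $\C^{(q)\perp}=\C^\perp\cap\gf(q)^n$, the inequality is immediate: the minimum distance of a subfield subcode is always at least the minimum distance of the ambient code, since the subfield subcode is a subset of the ambient code and the Hamming weight of a vector does not depend on which field one regards its entries as belonging to. Hence any nonzero codeword of $\C^{(q)\perp}$ is a nonzero codeword of $\C^\perp$, so its weight is $\geq d^\perp$, giving $d^{(q)\perp}\geq d^\perp$. I would also remark that this shows more, namely $\C^{(q)\perp}$ is exactly the subfield subcode (Delsarte's dual of the trace code picture), which is worth stating for use later in the paper.

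The main obstacle is purely the bookkeeping in establishing $\C^{(q)\perp}=\C^\perp\cap\gf(q)^n$: one must be careful that the field expansion used to build $\C^{(q)}$ from a generator matrix of $\C$ corresponds, on the dual side, to the subfield-subcode operation on $\C^\perp$, and that this correspondence is independent of the chosen basis and generator matrix — a fact already recorded in the excerpt by citing Theorems~2.1 and~2.6 of \cite{DH}. Once the well-definedness is in hand, the argument is a short linear-algebra computation expanding $\sum_i x_i c_i = 0$ coordinatewise in the basis $\{\beta_1,\dots,\beta_m\}$; no character sums or deeper machinery are needed for this particular lemma.
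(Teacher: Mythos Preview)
The paper does not supply its own proof of this lemma; it is quoted verbatim from \cite[Theorem 2.7]{DH} and used as a black box. So there is no in-paper argument to compare against.

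Your approach is correct and is essentially the standard one (and indeed the one in \cite{DH}): the key identity $\C^{(q)\perp}=\C^\perp\cap\gf(q)^n$ is exactly Delsarte's duality between trace codes and subfield subcodes, since by Lemma~\ref{th-tracerepresentation} one has $\C^{(q)}=\Tr_{q^m/q}(\C)$, and then $(\Tr_{q^m/q}(\C))^\perp=\C^\perp|_{\gf(q)}$. Once that identity is in hand, the inequality $d^{(q)\perp}\geq d^\perp$ is immediate, as you say. One small wobble: your intermediate remark that ``the parity-check matrix of $\C^{(q)}$ can be taken to be the matrix obtained from a parity-check matrix $H$ of $\C$ by an analogous field-expansion of its rows'' is not quite right as stated --- field-expanding a parity-check matrix of $\C$ yields a \emph{generator} matrix of $\Tr_{q^m/q}(\C^\perp)$, whose null space is the subfield subcode $\C|_{\gf(q)}$, not $\C^{(q)}$. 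But you do not actually use that description; the basis-expansion computation you give at the end of the paragraph (expanding $\sum_i x_i c_i=0$ over $\{\beta_1,\dots,\beta_m\}$, or equivalently using $\Tr_{q^m/q}(a\,\bx\cdot\bc)=0$ for all $a\in\gf(q^m)$) is the right justification for $\C^{(q)\perp}=\C^\perp\cap\gf(q)^n$ and stands on its own.
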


The trace representation of the $q$-ary subfield code $\C^{(q)}$ of a linear code $\C$ over $\gf(q^m)$  is presented as follows.
\begin{lemma}\label{th-tracerepresentation}\cite[Theorem 2.5]{DH}
Let $\C$ be an $[n,k]$ code over $\gf(q^m)$. Let $G=[g_{ij}]_{1\leq i \leq k, 1\leq j \leq n}$ be a generator matrix of $\C$. Then the trace representation of the subfield code $\C^{(q)}$ is given by
$$
\C^{(q)}=\left\{\left(\tr_{q^m/q}\left(\sum_{i=1}^{k}a_ig_{i1}\right),
\cdots,\tr_{q^m/q}\left(\sum_{i=1}^{k}a_ig_{in}\right)\right):a_1,\ldots,a_k\in \gf(q^m)\right\}.
$$
\end{lemma}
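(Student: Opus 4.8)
The plan is to make the column-vector construction defining $\C^{(q)}$ completely explicit by choosing a convenient basis of $\gf(q^m)$ over $\gf(q)$ --- namely the trace-dual of an arbitrary basis --- and then reading off a generator matrix for $\C^{(q)}$ and describing its $\gf(q)$-row space.

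First I would fix a basis $\{\beta_1,\dots,\beta_m\}$ of $\gf(q^m)$ over $\gf(q)$ together with its dual basis $\{\gamma_1,\dots,\gamma_m\}$ with respect to the trace bilinear form, i.e.\ $\tr_{q^m/q}(\beta_\ell \gamma_t)=\delta_{\ell t}$; such a dual basis exists because the trace form is nondegenerate. Then every $x\in\gf(q^m)$ has the expansion $x=\sum_{\ell=1}^m \tr_{q^m/q}(\gamma_\ell x)\,\beta_\ell$, so the $m\times 1$ column vector representing an entry $g_{ij}$ of $G$ with respect to $\{\beta_1,\dots,\beta_m\}$ has $\ell$-th coordinate $\tr_{q^m/q}(\gamma_\ell g_{ij})$. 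Hence the $km\times n$ matrix produced by the construction in the definition of $\C^{(q)}$ has its rows indexed by pairs $(i,\ell)$, the $(i,\ell)$-th row being $\bigl(\tr_{q^m/q}(\gamma_\ell g_{i1}),\dots,\tr_{q^m/q}(\gamma_\ell g_{in})\bigr)$.

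Next I would describe the $\gf(q)$-row space of this matrix. A generic codeword is $\sum_{i=1}^k\sum_{\ell=1}^m c_{i\ell}\bigl(\tr_{q^m/q}(\gamma_\ell g_{i1}),\dots,\tr_{q^m/q}(\gamma_\ell g_{in})\bigr)$ with all $c_{i\ell}\in\gf(q)$, whose $j$-th coordinate, by $\gf(q)$-linearity of $\tr_{q^m/q}$, equals $\sum_{i=1}^k \tr_{q^m/q}\bigl((\sum_{\ell=1}^m c_{i\ell}\gamma_\ell)\,g_{ij}\bigr)$. Setting $a_i=\sum_{\ell=1}^m c_{i\ell}\gamma_\ell\in\gf(q^m)$ and noting that, since $\{\gamma_1,\dots,\gamma_m\}$ is a basis, the map $(c_{i1},\dots,c_{im})\mapsto a_i$ is a bijection of $\gf(q)^m$ onto $\gf(q^m)$ for each $i$, the set of codewords becomes exactly $\{(\tr_{q^m/q}(\sum_i a_i g_{i1}),\dots,\tr_{q^m/q}(\sum_i a_i g_{in})) : a_1,\dots,a_k\in\gf(q^m)\}$, which is the claimed trace representation.

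Finally I would tidy up well-definedness. Independence of the chosen basis is immediate, since changing $\{\beta_\ell\}$ multiplies the block of rows by an invertible $\gf(q)$-matrix, i.e.\ performs invertible row operations that do not alter the row space; alternatively one may invoke the basis-independence already recalled above. Independence of $G$ follows because a second generator matrix of $\C$ is $G'=MG$ for an invertible $M\in\gf(q^m)^{k\times k}$, and $\sum_i a_i g'_{ij}=\sum_t b_t g_{tj}$ with $b=M^{\top}a$ ranging over all of $\gf(q^m)^k$ as $a$ does, so the described set is unchanged. I do not expect any real obstacle here: the whole content is the dual-basis expansion, and the only point needing care is the bookkeeping that converts the abstract column-vector construction into the explicit trace formula, together with checking that the reindexing $(c_{i\ell})\leftrightarrow(a_i)$ is a bijection.
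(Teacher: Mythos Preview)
Your argument is correct. The paper does not supply its own proof of this lemma: it is quoted verbatim from \cite[Theorem~2.5]{DH} and used as a black box, so there is nothing in the present paper to compare your proof against. Your dual-basis computation is exactly the standard way to establish this result: writing each entry $g_{ij}$ in coordinates via $x=\sum_\ell \tr_{q^m/q}(\gamma_\ell x)\beta_\ell$ identifies the rows of the expanded $km\times n$ matrix, and the $\gf(q)$-linearity of the trace together with the fact that $\{\gamma_\ell\}$ is itself a basis gives the bijection between $\gf(q)$-row combinations and tuples $(a_1,\dots,a_k)\in\gf(q^m)^k$. The well-definedness remarks at the end are also accurate, though as you note they are already asserted in the paper's preceding paragraph.
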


The subfield subcode $\C|_{\gf(q)}$ of an $[n, k]$ code over $\gf(q^m)$ is the set of codewords
in $\C$ each of whose components is in $\C$. Hence, the dimension of the subfield subcode $\C|_{\gf(q)}$ is at most $k$. Thus, the subfield code over $\gf(q)$ and subfield subcode over $\gf(q)$ of a linear code over $\gf(q^m)$ are different codes in general. In fact, it is easy
to see that the subfield subcode $\C|_{\gf(q)}$ is a subcode of the subfield code $\C^{(q)}$.
Subfield codes were considered
in \cite{CCZ} and \cite{CCD} without using the name ``subfield codes". Subfield codes were
defined formally in \cite[p. 5117]{MagmaHK} and a Magma function for subfield codes is implemented in the Magma package. Notice that subfield subcodes were well studied in the literature \cite{Dels,DK02,GP10,PJ14,Stich}.

\subsection{Linear codes from maximal arcs in $\PG(2, \gf(2^m))$}

Let $q$ be a power of a prime. A \emph{maximal} $(n,h)$-\emph{arc} in the Desarguessian projective plane $\PG(2,\gf(q))$ is a set of $n=hq+h-q$ points such that every line meets $\cA$ in just $h$ points or in none at all. A line is called a \emph{secant} if it meets $\cA$, and \emph{external line} otherwise. For a maximal $(n,h)$-arc $\cA$, the set of lines external to $\cA$ is a maximal $(q(q-h+1)/h,q/h)$-arc in the dual plane and called the \emph{dual} of $\cA$. It follows that a necessary condition for a maximal $(n,h)$-arc exists is $h|q$. Any point of $\PG(2,\gf(q))$ is a $(1,1)$-arc and the complement of any line is a maximal $(q^2,q)$-arc, which are called \emph{trivial} maximal arcs. In \cite{BBM}, Ball, Blokhuis and Mazzocca proved that no nontrivial maximal arcs exist in $\PG(2,\gf(q))$ for odd $q$. When $h=2$, maximal arcs become hyperovals.

In 1969, Denniston used a special pencil  of conics to construct maximal arcs in $\PG(2,\gf(q))$ for even $q$ \cite{D}. Let $x^2+\beta x+1$ be irreducible over $\gf(q)$. Define
$$
F_\lambda:=\{(x,y,z):\lambda x^2+y^2+\beta yz +z^2=0\},\ \lambda\in \gf(q)\cup \{\infty\}.
$$
It is easy to verify that $F_0=\{(1,0,0)\}$ and $F_\infty$ is the line $x=0$.
Each other $F_{\lambda}$ is a conic for $\lambda \in \gf(q)^*$.
We call $F_\lambda$ the \emph{standard pencil} \cite{M}. The following theorem documents \emph{Denniston arcs}.

\begin{theorem}\cite{AL, D}\label{th-1}
Let $H$ be a subset of $\gf(q)$ of order $h$. Then the set $\cA:=\cup_{\lambda\in H}F_\lambda$ is a maximal $(n,h)$-arc if and only if $H$ is an additive subgroup of $\gf(q)$, where $n=hq+h-q$.
\end{theorem}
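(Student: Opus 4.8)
My plan is to count incidences between points of the candidate set $\cA := \cup_{\lambda \in H} F_\lambda$ and the lines of $\PG(2, \gf(q))$, and to reduce the maximal-arc property to a purely additive statement about $H$. First I would record the basic structure of the standard pencil: the sets $F_\lambda$ for $\lambda \in \gf(q)\cup\{\infty\}$ partition the point set of $\PG(2,\gf(q))$. Indeed, since $x^2 + \beta x + 1$ is irreducible, the form $Q(x,y,z) = y^2 + \beta yz + z^2$ vanishes on the plane only at $(1,0,0)$, so for any point $P = (x_0,y_0,z_0)$ with $x_0 \neq 0$ there is exactly one $\lambda = -(y_0^2+\beta y_0 z_0 + z_0^2)/x_0^2 \in \gf(q)$ with $P \in F_\lambda$; the points with $x_0 = 0$ lie on $F_\infty$, and $(1,0,0) \in F_0$. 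Hence $|\cA| = \sum_{\lambda \in H}|F_\lambda|$, and using $|F_0|=1$, $|F_\lambda| = q+1$ for $\lambda \in \gf(q)^*$ (a nondegenerate conic), one gets $|\cA| = h q + h - q$ whether or not $0 \in H$; this already matches the claimed $n = hq+h-q$.

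The crux is analyzing how an arbitrary line $\ell$ meets $\cA$. Fix a line $\ell$ not equal to $F_\infty$ and not through the nucleus-type point $(1,0,0)$ — the special lines $F_\infty$ and lines through $(1,0,0)$ should be treated separately and directly, since on them the pencil degenerates. On a generic line $\ell$, parametrize its $q+1$ points; each point $P \in \ell$ lies in a unique $F_{\lambda(P)}$, so $\ell$ meets $\cA$ in $|\{P \in \ell : \lambda(P) \in H\}|$ points. The key computation is to show that the function $P \mapsto \lambda(P)$ restricted to $\ell$ is, after an affine reparametrization of $\ell$ by $\gf(q)$, a map of the form $t \mapsto a t^2 + b t + c$ for suitable $a,b,c \in \gf(q)$ with $a \neq 0$ (this is where the conic equation $\lambda x^2 + y^2 + \beta yz + z^2 = 0$ enters: solving for $\lambda$ along a line gives a quadratic in the line-parameter). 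Since $q$ is even, $t \mapsto at^2 = (\sqrt{a}\,t)^2$ is a bijection on $\gf(q)$, so $t \mapsto at^2+bt+c$ is 2-to-1 onto a coset of the image hyperplane... more precisely, writing $q = 2^m$, the map $t \mapsto at^2+bt+c$ has image a coset of an $\F_2$-subspace and every value is attained an even number of times (0 or 2, or possibly all of $\gf(q)$ is covered with multiplicity depending on $b$). The upshot I want is: for each generic line $\ell$ there is a distinguished value $\mu_\ell \in \gf(q)$ and the fiber over any $\lambda$ has even size; consequently $|\ell \cap \cA| = |\{P \in \ell: \lambda(P) \in H\}|$ equals twice the number of "new" values of $\lambda$ in $H$ hit by $\ell$.

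From here the two directions separate cleanly. For the ``if'' direction: assume $H$ is an additive subgroup; then $H$ is a coset-stable $\F_2$-subspace of $(\gf(q),+)$, and the image of $t \mapsto at^2+bt+c$ along $\ell$ is a coset of an $\F_2$-subspace; intersecting a subgroup with a coset of a subgroup gives either the empty set or a coset of the intersection subgroup, whose size is a power of $2$ — and pulling back through the 2-to-1 map one checks the count is always $0$ or exactly $h$, giving the maximal-arc property. For the ``only if'' direction: suppose $\cA$ is a maximal $(n,h)$-arc, so every line meets it in $0$ or $h$ points. By choosing lines cleverly — in particular lines joining two points of $\cA$ lying in different $F_\lambda$'s, or lines realizing specific small fibers — one forces closure of $H$ under addition: if $\lambda_1, \lambda_2 \in H$ then a suitable secant line forces $\lambda_1 + \lambda_2$ (and $0$) into $H$ as well. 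I expect \textbf{the main obstacle to be the careful case analysis of the degenerate lines} — $F_\infty$ and the lines through $(1,0,0)$ — together with pinning down exactly the shape of the quadratic $t \mapsto at^2+bt+c$ and its fiber sizes in characteristic $2$, since the coefficient $b$ can vanish (making the map 2-to-1 everywhere) or not (making it a linearized-polynomial-type map), and the maximal-arc count has to come out to $h$ in all subcases. Once the ``quadratic along a line'' lemma is established and the even-fiber property is in hand, the translation into the additive-subgroup condition on $H$ is essentially a statement about intersecting $\F_2$-subspaces with their cosets.
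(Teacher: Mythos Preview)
The paper does not supply a proof of this theorem at all: it is quoted from the literature (Denniston for the ``if'' direction and Abatangelo--Larato for the characterization) and used as a black box. So there is no ``paper's own proof'' to compare against.

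Your outline is headed in the classical direction, but a few points deserve tightening. First, the claim that $|\cA|=hq+h-q$ ``whether or not $0\in H$'' is false: if $0\notin H$ then $|\cA|=h(q+1)$, so for the stated $n$ one already needs $0\in H$. Second, the description of the fiber structure of $t\mapsto at^2+bt+c$ in characteristic $2$ is imprecise; when $b=0$ this map is a bijection, not $2$-to-$1$. A cleaner route is to note that every line not through the common nucleus $(1,0,0)$ meets each conic $F_\lambda$ in $0$ or $2$ points, and that the set $S_\ell=\{\lambda:|\ell\cap F_\lambda|=2\}$ is exactly an affine $\F_2$-hyperplane of the form $\{\lambda:\tr_{q/2}(c_\ell\lambda)=1\}$ with $c_\ell\neq 0$ (this uses the irreducibility of $x^2+\beta x+1$ and the trace criterion for solvability of a binary quadratic in characteristic $2$). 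The ``if'' direction then drops out immediately, since a linear form on a subgroup is either zero or balanced. Third, your sketch of the ``only if'' direction (``choosing lines cleverly \ldots forces $\lambda_1+\lambda_2\in H$'') is where the real content lies and is the part that Abatangelo--Larato actually supply; you have not yet identified the mechanism that converts the hyperplane-bisection property (every linear form either vanishes on $H$ or bisects it) into additive closure, and this step is not as immediate as your outline suggests.
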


Given a maximal $(n,h)$-arc $\cA$, the points in the arc define a $3\times n$ matrix $G$ over $\gf(q)$ with each column vector of $G$ being a point in $\cA$. Let $\C(\cA)$ be the linear code spanned by the rows of $G$. Then $\C(\cA)$ is referred to as a \emph{maximal arc code}. By definition, $\cA$ meets each line in either 0 or $h$ points. Note that in $\PG(2,\gf(q))$ lines and hyperplanes are the same. Then it is easy to derive the weight distribution of $\C(\cA)$ and the parameters of $\C(\cA)^{\perp}$ given in the following theorem (see, for example,  \cite[Theorem 6]{WDT}).

\begin{theorem}\label{th-2}
Let $q=2^m$ for any $m\geq 2$ and $h=2^s$ with $1\leq s < m$. Let $\cA$ be a maximal $(n,h)$-arc in $\PG(2,\gf(q))$. Then the maximal arc code $\C(\cA)$ has parameters $[n,3,n-h]$ and weight enumerator
$$1+\frac{(q^2-1)n}{h}z^{n-h}+\frac{(q^3-1)h-(q^2-1)n}{h}z^n,$$ where $n=hq+h-q$. The dual $\C(\cA)^{\perp}$ has parameters $[n,n-3,4]$ if $s=1$ and $[n,n-3,3]$ if $s>1$.
\end{theorem}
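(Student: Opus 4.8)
The plan is to read both statements off the geometry of $\cA$ together with the column structure of the generator matrix $G$. Fix representatives in $\gf(q)^3$ for the $n$ points of $\cA$, taken as the columns of $G$; then every codeword has the form $\bc_{\bu}=(\bu\cdot P_1,\ldots,\bu\cdot P_n)$ with $\bu\in\gf(q)^3$, and for $\bu\neq\bzero$ the coordinates where $\bc_{\bu}$ vanishes are exactly the points of $\cA$ lying on the line $\bu^{\perp}$ of $\PG(2,\gf(q))$. Since a maximal $(n,h)$-arc meets every line in either $0$ or $h$ points, we get $\wt(\bc_{\bu})\in\{n-h,\,n\}$ for every $\bu\neq\bzero$; hence $\C(\cA)$ has at most two nonzero weights and, once it is shown to be nonzero, its minimum distance equals $n-h$.

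Next I would pin down the dimension. If the columns of $G$ spanned only a $2$-dimensional subspace of $\gf(q)^3$, then all $n$ points of $\cA$ would be collinear, forcing $n\leq q+1$; but $n=hq+h-q=(h-1)(q+1)+1>q+1$ for $h\geq 2$, a contradiction. Thus $\rank G=3$ and $\C(\cA)$ is an $[n,3,n-h]$ code.

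For the weight enumerator I count the secant lines by a double incidence count: each of the $q+1$ lines through a fixed point of $\cA$ is secant, and each secant line carries $h$ points of $\cA$, so there are $n(q+1)/h$ secant lines. Every line of $\PG(2,\gf(q))$ equals $\bu^{\perp}$ for exactly $q-1$ nonzero vectors $\bu$, all of which yield codewords of the same weight. Therefore $A_{n-h}=(q-1)\cdot n(q+1)/h=(q^2-1)n/h$ and $A_{n}=(q^3-1)-A_{n-h}=\big((q^3-1)h-(q^2-1)n\big)/h$, which is precisely the claimed weight enumerator (and this also shows $A_n>0$, so the code is genuinely two-weight).

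Finally, for $\C(\cA)^{\perp}$ the length is $n$ and the dimension is $n-3$, while its minimum distance $d^{\perp}$ is the smallest number of linearly dependent columns of $G$. No column is zero and no two columns are proportional, since distinct projective points give linearly independent vectors, so $d^{\perp}\geq 3$. If $s>1$ then $h=2^s\geq 4$, so any secant line supplies three collinear points of $\cA$, i.e.\ three dependent columns, and $d^{\perp}=3$; if $s=1$ then $\cA$ is a hyperoval, which has no three collinear points, so $d^{\perp}\geq 4$, and since $G$ has only three rows any four columns are linearly dependent, forcing $d^{\perp}=4$. None of this is deep; the step that rewards a little care is the double count fixing the number of secant lines together with the factor $q-1$ relating lines to codewords, for which the identity $(q+1)(h-1)=n-1$ is a convenient consistency check.
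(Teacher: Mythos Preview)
Your proof is correct and complete. The paper does not actually prove this theorem: it states the result and remarks that ``it is easy to derive the weight distribution of $\C(\cA)$ and the parameters of $\C(\cA)^{\perp}$,'' citing \cite[Theorem~6]{WDT} for details. So there is no proof in the paper to compare against; you have supplied the argument the authors chose to omit, and every step---the identification of codeword zeros with line intersections, the dimension argument via $n>q+1$, the double count of incident (point, secant) pairs giving $n(q+1)/h$ secant lines, the factor $q-1$ from lines to nonzero vectors, and the column-dependence analysis for $d^{\perp}$---is sound and is exactly the standard route one would expect the cited reference to take.
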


We will use the Denniston arc codes to construct a class of distance-optimal binary codes
in Section \ref{sec-232}.

\subsection{Linear codes from maximal arcs in $\PG(r, \gf(q))$}

An \emph{arc}\index{arc} in $\PG(r, \gf(q))$ is a set of at least $r+1$ points
in $\PG(r, \gf(q))$ such that no $r+1$ of them lie in a hyperplane. A \emph{cap}\index{cap} in
$\PG(r, \gf(q))$ is a set of points such that no three are collinear.

Given a set $\cA=\{\bg_1, \bg_2, \cdots, \bg_{n}\}$ with $n$ points in $\PG(r, \gf(q))$,
where each $\bg_i$ is a $(r+1) \times 1$ vector in $\gf(q)^{r+1}$, we define a matrix
\begin{eqnarray}\label{eqn-genmatA}
G_{\cA}=[\bg_1 \bg_2 \cdots \bg_n].
\end{eqnarray}
The linear code over $\gf(q)$ with generator matrix $G_{\cA}$ is denoted by $\C(\cA)$.
The following theorem is well known (see, for example,
\cite[Chapter 12]{Dingbk18}).

\begin{theorem}\label{th-arccode}
Let $\cA$ be an $n$-subset of the point set in $\PG(r, \gf(q))$ with $n \geq r+1$.
Then $\cA$ is an arc in $\PG(r, \gf(q))$ if and only if the corresponding code $\C_{\cA}$
is an $[n, r+1, n-r]$ MDS code over $\gf(q)$.
\end{theorem}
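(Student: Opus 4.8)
The plan is to reduce both directions of the claimed equivalence to a single combinatorial condition on the columns of $G_{\cA}$, namely that every $r+1$ of the vectors $\bg_1,\dots,\bg_n$ are linearly independent over $\gf(q)$, and then to verify that each side of the statement is equivalent to this condition.

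First I would make the geometric side precise. A hyperplane of $\PG(r,\gf(q))$ is the projectivization of an $r$-dimensional $\gf(q)$-subspace of $\gf(q)^{r+1}$, so $r+1$ points of $\PG(r,\gf(q))$ lie in a common hyperplane exactly when their representative vectors in $\gf(q)^{r+1}$ are linearly dependent. Hence, by the definition of an arc, $\cA$ is an arc if and only if every $r+1$ of the vectors $\bg_1,\dots,\bg_n$ form a basis of $\gf(q)^{r+1}$, i.e.\ every $(r+1)\times(r+1)$ submatrix of $G_{\cA}$ is nonsingular. Since $n\ge r+1$, this in particular forces $\rank(G_{\cA})=r+1$, so that $\dim_{\gf(q)}\C(\cA)=r+1$ in that case.

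Next I would connect this to the weight structure of $\C(\cA)=\{\bu G_{\cA}:\bu\in\gf(q)^{r+1}\}$. For nonzero $\bu$, the $j$-th coordinate of $\bu G_{\cA}$ vanishes precisely when $\bu$ is orthogonal to $\bg_j$; hence $\bu G_{\cA}$ has at least $n-r$ zero coordinates if and only if $\bu$ is orthogonal to some $r+1$ of the $\bg_j$. Therefore $\C(\cA)$ contains a nonzero codeword of weight at most $n-r-1$ if and only if some $r+1$ of the columns of $G_{\cA}$ admit a common nonzero orthogonal vector, i.e.\ fail to span $\gf(q)^{r+1}$. Equivalently, $d(\C(\cA))\ge n-r$ if and only if every $r+1$ columns of $G_{\cA}$ are linearly independent. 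Combined with the Singleton bound $d\le n-\dim\C(\cA)+1=n-r$ (valid once the dimension is known to be $r+1$), this shows that $\C(\cA)$ is an $[n,r+1,n-r]$ MDS code under exactly the same column-independence condition, closing the chain of equivalences.

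I do not anticipate a real obstacle: this is the classical dictionary between arcs in projective space and generator matrices of MDS codes. The only points needing care are (i) the projective-to-linear-algebra translation (a set of $r+1$ points lies in a hyperplane iff the lifted vectors are linearly dependent) and (ii) checking that $\dim\C(\cA)=r+1$ before invoking the Singleton bound, both of which follow from $n\ge r+1$ together with the column-independence property. Alternatively one could simply cite the textbook account in \cite[Chapter 12]{Dingbk18}.
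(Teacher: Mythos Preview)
Your argument is the standard dictionary between arcs and MDS codes and is correct in substance; the paper itself does not prove this theorem at all but simply cites \cite[Chapter 12]{Dingbk18}, so your write-up is already more detailed than what the paper provides. One small slip to fix: in the sentence ``hence $\bu G_{\cA}$ has at least $n-r$ zero coordinates if and only if $\bu$ is orthogonal to some $r+1$ of the $\bg_j$,'' the phrase ``at least $n-r$ zero coordinates'' should read ``at least $r+1$ zero coordinates'' (equivalently, weight at most $n-r-1$); your next sentence uses the correct count, so this is only a typo, not a gap.
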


\vspace{.1cm}
If $\cA$ is an arc in  $\PG(r, \gf(q))$, the code $\C_{\cA}$ is called an \emph{arc code}.
In Section \ref{sec-june23}, we will use some arc codes over $\gf(2^m)$ to construct a class
of distance-optimal binary linear codes.

\section{The class of binary codes with parameters $[2^{m+s}+2^s-2^m,2^{m+s}+2^s-2^m-2m-2,4]$}\label{sec-232}

In this section, we present our first class of distance-optimal binary linear codes which are
based on the Denniston arcs.

\subsection{The construction of the binary codes}

Let $q=2^m$ with $m\geq 2$ and $\gf(q)^*:=\gf(q)\setminus \{0\}$. Let $H$ be an additive subgroup of $\gf(q)$ with $h:=|H|=2^s$, where $1< s <m$. Let $x^2+\beta x+1$ be irreducible over $\gf(q)$. Denote $H^*=H\setminus\{0\}$. Recall the standard pencil in $\PG(2,\gf(q))$ defined by
$$F_\lambda:=\{(x,y,z):\lambda x^2+y^2+\beta yz +z^2=0\},\ \lambda\in \gf(q)\cup \{\infty\}.$$
By Theorem \ref{th-1}, the set $\cA:=\cup_{\lambda\in H}F_\lambda$ is a maximal $(n,h)$-arc called the Denniston arc, where $n=hq+h-q$.

\vspace{0.1cm}

\begin{lemma}\label{lem-standardpencil}
If $\lambda=0$, then $F_0=\{(1,0,0)\}$. If $\lambda\in \gf(q)^*$, then
$$F_\lambda=\left\{(\lambda^{-\frac{q}{2}},1,0)\right\} \bigcup \left\{\left(\lambda^{-\frac{q}{2}}(y+\beta^{\frac{q}{2}}y^{\frac{q}{2}}+1),y,1\right):y\in \gf(q)\right\}.$$
\end{lemma}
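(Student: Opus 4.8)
The plan is to describe the conic $F_\lambda$ explicitly by solving the defining equation $\lambda x^2 + y^2 + \beta yz + z^2 = 0$ in the projective plane, splitting into the cases $z = 0$ and $z \neq 0$ (after which we may normalise $z = 1$). First I would handle $\lambda = 0$, where the equation becomes $y^2 + \beta yz + z^2 = 0$; since $x^2 + \beta x + 1$ is irreducible over $\gf(q)$, the only solution with $y, z \in \gf(q)$ is $y = z = 0$, forcing the point $(1,0,0)$, so $F_0 = \{(1,0,0)\}$.

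For $\lambda \in \gf(q)^*$, consider first $z = 0$: the equation reduces to $\lambda x^2 + y^2 = 0$, i.e.\ $x^2 = \lambda^{-1} y^2$. Over a field of characteristic $2$ the squaring map $t \mapsto t^2$ is a bijection, and $\lambda^{-1}$ has a unique square root, namely $\lambda^{-1/2} = \lambda^{q/2}$ (using $t^{1/2} = t^{q/2}$ for $t \in \gf(q) = \gf(2^m)$). Hence $x = \lambda^{-q/2} y$; a point with $y = 0$ would force $x = 0$, which is not a projective point, so $y \neq 0$ and we may normalise $y = 1$, giving the single point $(\lambda^{-q/2}, 1, 0)$. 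Next take $z = 1$; the equation is $\lambda x^2 + y^2 + \beta y + 1 = 0$, so $x^2 = \lambda^{-1}(y^2 + \beta y + 1)$, and taking square roots (again using $t \mapsto t^{1/2} = t^{q/2}$, which is additive in characteristic $2$) yields $x = \lambda^{-q/2}(y^{q/2} + \beta^{q/2} y^{q/2} \cdot \text{?})$; more precisely $x = \lambda^{-q/2}\bigl(y^2 + \beta y + 1\bigr)^{q/2} = \lambda^{-q/2}\bigl(y^{q} \cdot \text{(adjust)} \bigr)$. The clean way is: $\bigl(y^2 + \beta y + 1\bigr)^{q/2} = (y^2)^{q/2} + (\beta y)^{q/2} + 1^{q/2} = y^{q} + \beta^{q/2} y^{q/2} + 1$, and since $y \in \gf(q)$ satisfies $y^q = y$, this equals $y + \beta^{q/2} y^{q/2} + 1$. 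Thus $x = \lambda^{-q/2}\bigl(y + \beta^{q/2} y^{q/2} + 1\bigr)$, and $y$ ranges freely over $\gf(q)$, producing exactly the claimed second set of points.

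Finally I would note that these two cases ($z = 0$ and $z = 1$) are disjoint and exhaust all projective points of $F_\lambda$, so the union is precisely $F_\lambda$; one should also check there is no double-counting within the $z = 1$ family (distinct $y$ give distinct points since the $y$-coordinate distinguishes them). The only mildly delicate point, rather than a real obstacle, is bookkeeping the exponent $q/2$: one must consistently use that in $\gf(2^m)$ the inverse of the Frobenius-type squaring is $t \mapsto t^{2^{m-1}} = t^{q/2}$, that this map is additive, and that it fixes $\gf(2)$ and (via $y^q = y$) sends $y^2 \mapsto y$. Once that identity is in hand the computation is routine, and the statement of Lemma~\ref{lem-standardpencil} follows directly.
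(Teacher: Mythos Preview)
Your proof is correct and follows essentially the same route as the paper's: both solve $\lambda x^2 + y^2 + \beta yz + z^2 = 0$ for $x$ by taking the unique square root in characteristic $2$ (via $t^{1/2}=t^{q/2}$, $y^q=y$) and then normalise the projective coordinates according to whether $z=0$ or $z\neq 0$. One small slip to fix: you wrote $\lambda^{-1/2}=\lambda^{q/2}$, but the square root of $\lambda^{-1}$ is $(\lambda^{-1})^{q/2}=\lambda^{-q/2}$, which is exactly what you use in the very next line, so the argument itself is unaffected.
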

\begin{proof}
Since $x^2+\beta x+1$ is irreducible over $\gf(q)$, we have $F_0=\{(1,0,0)\}$. If $\lambda\in \gf(q)^*$, then $\lambda x^2+y^2+\beta yz +z^2=0$ implies
$$x=\lambda^{-\frac{q}{2}}(y+\beta^{\frac{q}{2}}y^{\frac{q}{2}}z^{\frac{q}{2}}+z).
$$
Since $F_\lambda$ is a subset of the point set of $\PG(2,\gf(q))$ for $\lambda \in \gf(q)^*$, we have
\begin{eqnarray*}F_\lambda &=&\left\{\left(\lambda^{-\frac{q}{2}}(y+\beta^{\frac{q}{2}}y^{\frac{q}{2}}z^{\frac{q}{2}}+z),y,z\right):y,z \in \gf(q)\right\}\\
&=&\left\{(\lambda^{-\frac{q}{2}},1,0)\right\} \bigcup \left\{\left(\lambda^{-\frac{q}{2}}\left(\frac{y}{z}+\beta^{\frac{q}{2}}\left(\frac{y}{z}\right)^{\frac{q}{2}}+1\right),\frac{y}{z},1\right):y\in \gf(q),z\in \gf(q)^*\right\}\\
&=&\left\{(\lambda^{-\frac{q}{2}},1,0)\right\} \bigcup \left\{\left(\lambda^{-\frac{q}{2}}(y+\beta^{\frac{q}{2}}y^{\frac{q}{2}}+1),y,1\right):y\in \gf(q)\right\}.
\end{eqnarray*}
The proof is completed.
\end{proof}

Let $H=\{0,\lambda_1,\lambda_2,\cdots,\lambda_{h-1}\}$ and $\gf(q)=\{y_1,\cdots,y_q\}$. Define
$$G_\lambda=\begin{bmatrix} \lambda_1^{-\frac{q}{2}} & \cdots & \lambda_{h-1}^{-\frac{q}{2}}\\
 1 & \cdots & 1\\
0 & \cdots & 0 \\\end{bmatrix}$$
and
\begin{eqnarray}\label{matrix-1}
G_{\lambda}(y)=\begin{bmatrix}  \lambda_1^{-\frac{q}{2}}(y+\beta^{\frac{q}{2}}y^{\frac{q}{2}}+1) & \cdots & \lambda_{h-1}^{-\frac{q}{2}}(y+\beta^{\frac{q}{2}}y^{\frac{q}{2}}+1)\\
  y & \cdots & y\\
1 & \cdots & 1 \\
\end{bmatrix}
\end{eqnarray}
for $y\in \gf(q)$.
By Lemma \ref{lem-standardpencil}, the Denniston arc code $\C(\cA)$ has a generator matrix
$$G_{\cA}=\begin{bmatrix}G_\lambda & G_\lambda(y_1) & \cdots &  G_\lambda(y_q) & \begin{array}{c}
1 \\
0\\
0\\\end{array}\end{bmatrix}.$$
Due to Theorem \ref{th-2}, the Denniston arc code $\C(\cA)$  has parameters $[n,3,n-h]$, where $n=hq+h-q$. Consider now the \emph{augmented Denniston arc code} defined by
\begin{eqnarray}\label{defn}
\widetilde{\C}(\cA)=\{\bc+b\textbf{1}:\bc\in \C(\cA),b\in \gf(q)\},
\end{eqnarray}
where $\textbf{1}$ denotes the vector $(1, 1, \cdots, 1) \in \gf(q)^n$.
Then $\widetilde{\C}(\cA)$ has a generator matrix
\begin{eqnarray}\label{matrix-2}
\widetilde{G}_{\cA}=\begin{bmatrix}G_\lambda & G_\lambda(y_1) & \cdots &  G_\lambda(y_q) & \begin{array}{c}
1 \\
0\\
0\\\end{array}\\ \hline
1 & 1 & \cdots & 1 & 1\end{bmatrix}.\end{eqnarray}
It is obvious that $\widetilde{\C}(\cA)$ is an $[n,4]$ linear code over $\gf(q)$.

Now we consider the binary subfield code $\widetilde{\C}(\cA)^{(2)}$ of $\widetilde{\C}(\cA)$ defined in Equation (\ref{defn}). Combining Lemma \ref{th-tracerepresentation} and Equation (\ref{matrix-2}) yields the trace representation of $\widetilde{\C}(\cA)^{(2)}$ as follows.

\begin{lemma}\label{lem-trace}
The trace representation of $\widetilde{\C}(\cA)^{(2)}$ is given by
\begin{eqnarray*}
\lefteqn{ \widetilde{\C}(\cA)^{(2)}= } \\
& &\myatop{\bigg\{\bigg(\left(\tr_{q/2}(a_1\lambda^{-\frac{q}{2}}+a_2)+c\right)_{\lambda\in H^*},\left(\tr_{q/2}(a_1\lambda^{-\frac{q}{2}}(y+\beta^{\frac{q}{2}}y^{\frac{q}{2}}+1)+a_2y)+b+c\right)_{\myatop{\lambda\in H^*,}{y\in \gf(q)}},}{\tr_{q/2}(a_1)+c\bigg):
a_1,a_2,\in\gf(q),b,c\in \gf(2)\bigg\}.}
\end{eqnarray*}
\end{lemma}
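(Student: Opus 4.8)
The plan is to apply Lemma \ref{th-tracerepresentation} (the trace representation of a subfield code) directly to the generator matrix $\widetilde{G}_{\cA}$ given in Equation (\ref{matrix-2}). Since $\widetilde{\C}(\cA)$ is an $[n,4]$ code over $\gf(q)$ with $q=2^m$, a general codeword is obtained by taking a linear combination of the $4$ rows of $\widetilde{G}_{\cA}$ with coefficients $a_1, a_2, a_3, a_4 \in \gf(q)$ (here $a_1, a_2, a_3$ are the coefficients of the three rows coming from $G_\cA$, and $a_4$ is the coefficient of the all-ones row). Then the subfield code $\widetilde{\C}(\cA)^{(2)}$ consists of the componentwise traces $\tr_{q/2}(\cdot)$ of these combinations, as $(a_1,a_2,a_3,a_4)$ ranges over $\gf(q)^4$.

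First I would write out, column block by column block, the entry of the combined row $a_1 (\text{row }1) + a_2(\text{row }2) + a_3(\text{row }3) + a_4(\text{row }4)$. For the block $G_\lambda$ (indexed by $\lambda \in H^*$), the column is $(\lambda^{-q/2}, 1, 0)^{\mathsf T}$, so the entry is $a_1\lambda^{-q/2} + a_2 + a_4$. For the block $G_\lambda(y)$ (indexed by $\lambda \in H^*$ and $y \in \gf(q)$), the column is $(\lambda^{-q/2}(y + \beta^{q/2}y^{q/2} + 1),\, y,\, 1)^{\mathsf T}$, so the entry is $a_1\lambda^{-q/2}(y+\beta^{q/2}y^{q/2}+1) + a_2 y + a_3 + a_4$. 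For the final single column $(1,0,0)^{\mathsf T}$ together with the extra row, the entry is $a_1 + a_4$. Applying $\tr_{q/2}$ to each of these and using $\tr_{q/2}$-linearity, the coordinate from $G_\lambda$ becomes $\tr_{q/2}(a_1\lambda^{-q/2} + a_2) + \tr_{q/2}(a_3) + \tr_{q/2}(a_4)$; here the summand $\tr_{q/2}(a_3)$ actually does not appear in that block because the third coordinate of those columns is $0$ — I would be careful about which rows contribute to which block.

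The key simplification is the change of variables: since $a_3, a_4 \in \gf(q)$ are free, the quantities $\tr_{q/2}(a_3)$ and $\tr_{q/2}(a_4)$ each range over all of $\gf(2)$, and moreover $\tr_{q/2}$ is surjective onto $\gf(2)$, so I set $b := \tr_{q/2}(a_3) \in \gf(2)$ and $c := \tr_{q/2}(a_4) \in \gf(2)$. One checks that the resulting parametrization is exactly the claimed one: the $G_\lambda$-coordinates become $\tr_{q/2}(a_1\lambda^{-q/2}+a_2) + c$ (the $b$ drops out since row $3$ is zero there), the $G_\lambda(y)$-coordinates become $\tr_{q/2}(a_1\lambda^{-q/2}(y+\beta^{q/2}y^{q/2}+1) + a_2 y) + b + c$, and the last coordinate becomes $\tr_{q/2}(a_1) + c$. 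Thus the set of all codewords, as $(a_1, a_2) \in \gf(q)^2$ and $(b,c)\in\gf(2)^2$ vary, is precisely $\widetilde{\C}(\cA)^{(2)}$.

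The only mild subtlety — and the place I would be most careful — is making sure the reparametrization $(a_3, a_4) \mapsto (b,c)$ does not lose or double-count codewords: a priori different $(a_3,a_4)$ with the same traces $(b,c)$ give the same codeword, and conversely, so the image is genuinely indexed by $(a_1, a_2, b, c) \in \gf(q)^2 \times \gf(2)^2$, which is exactly what Lemma \ref{th-tracerepresentation} produces (it describes the code as a set, not with a bijective parametrization, so this is harmless). Beyond that, the proof is a direct substitution into Lemma \ref{th-tracerepresentation} followed by collecting terms, and there is no real obstacle.
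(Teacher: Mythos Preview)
Your proposal is correct and follows exactly the approach the paper indicates: the paper simply states that the lemma follows by ``combining Lemma \ref{th-tracerepresentation} and Equation (\ref{matrix-2}),'' and you have spelled out precisely that computation, including the reparametrization $b=\tr_{q/2}(a_3)$, $c=\tr_{q/2}(a_4)$ that collapses the $\gf(q)$-parameters $a_3,a_4$ to $\gf(2)$-parameters. There is nothing to add.
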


To determine the dimension of $\widetilde{\C}(\cA)^{(2)}$, we need the lemma below.

\vspace{.1cm}

\begin{lemma}\label{lem-number}
Let $q=2^m$ with $m\geq 2$. Let $H$ be an additive subgroup of $\gf(q)$ with $h=|H|=2^s$ and $1\leq s \leq m$. Let $x^2+\beta x+1$ be irreducible over $\gf(q)$. Denote
$$N=\sharp\left\{(\lambda,y)\in H^*\times \gf(q):\tr_{q/2}\left(A_1\lambda^{-\frac{q}{2}}(y+\beta^{\frac{q}{2}}y^{\frac{q}{2}})+A_2(y+1)\right)+B=0\right\},$$
where $A_1,A_2\in \gf(q)$ and $B\in \gf(2)$. Then $N=q(h-1)$ if and only if $(A_1,A_2,B)=(0,0,0)$.
\end{lemma}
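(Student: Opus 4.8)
The plan is to prove the two directions separately; the ``if'' direction is trivial, so the real work is the contrapositive of the ``only if'' direction, namely that if $(A_1,A_2,B)\neq(0,0,0)$ then $N<q(h-1)$. Fix such a triple and expand the count by orthogonality of the canonical additive character of $\gf(2)$ (equivalently, using $(-1)^{(\cdot)}$), writing
$$
N=\frac{1}{2}\sum_{\lambda\in H^*}\sum_{y\in\gf(q)}\Bigl(1+(-1)^{\tr_{q/2}\left(A_1\lambda^{-\frac{q}{2}}(y+\beta^{\frac{q}{2}}y^{\frac{q}{2}})+A_2(y+1)\right)+B}\Bigr)
=\frac{q(h-1)}{2}+\frac{(-1)^B}{2}\,T,
$$
where $T:=\sum_{\lambda\in H^*}\sum_{y\in\gf(q)}(-1)^{\tr_{q/2}\left(A_1\lambda^{-\frac{q}{2}}(y+\beta^{\frac{q}{2}}y^{\frac{q}{2}})+A_2(y+1)\right)}$. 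Thus $N=q(h-1)$ forces $T=(-1)^{B+1}q(h-1)$, i.e. $|T|=q(h-1)$. Since each inner sum over $y$ is a character sum of absolute value at most $q$, the triangle inequality gives $|T|\le q(h-1)$ with equality only if every inner sum has absolute value exactly $q$ and they all share a common sign (after accounting for $(-1)^B$). The strategy is to show this cannot happen unless $(A_1,A_2,B)=(0,0,0)$.

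The key observation is that the inner sum over $y$ has the form $\sum_{y\in\gf(q)}\chi_1\bigl(f_\lambda(y)\bigr)$ where, after moving to the canonical additive character of $\gf(q)$ via the transitivity of the trace, $f_\lambda(y)=A_1\lambda^{-q/2}\,y^{q/2}+(A_1\lambda^{-q/2}\beta^{q/2}\cdot\text{(adjusted)}+A_2)\,y+\text{const}$ is an \emph{affine $2$-polynomial} in $y$ (note $y^{q/2}=y^{2^{m-1}}$ and $y^{q/2}$ composed appropriately is linearized; the term $(y+\beta^{q/2}y^{q/2})$ is $\gf(2)$-linear in $y$). Hence Lemma \ref{lem-p-polynomial} applies: each inner sum is either $0$ or $\pm q$ (in fact $\chi_1(\text{const})\,q$), and it equals $\pm q$ precisely when a certain linearized condition on the coefficients — call it $L(\lambda)=0$ — holds. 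So $|T|=q(h-1)$ requires $L(\lambda)=0$ for \emph{all} $\lambda\in H^*$ simultaneously, and moreover all the resulting constants $\chi_1(\text{const}_\lambda)$ must be equal.

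First I would handle the case $A_1=0$: then $f_\lambda(y)=A_2 y+A_2$ is independent of $\lambda$, the condition $L\equiv 0$ becomes $A_2=0$ (by Lemma \ref{lem-p-polynomial}, since the linear coefficient must vanish), and then $T=q(h-1)$ with $N=q(h-1)$ forces $(-1)^B=1$, i.e. $B=0$; combined with $A_1=A_2=0$ this is the excluded triple, contradiction. So assume $A_1\neq0$. Then the $\lambda$-dependent coefficient of $y^{q/2}$ in $f_\lambda$ is $A_1\lambda^{-q/2}\neq0$, and the linearized vanishing condition $L(\lambda)=0$ coming from Lemma \ref{lem-p-polynomial} is a nontrivial equation in $\lambda$ (it equates, after raising to suitable $2$-powers, a fixed nonzero multiple of $\lambda^{-q/2}$ — equivalently of $\lambda$ after Frobenius — against a fixed element); such an equation has at most one solution $\lambda$, or at most $\gcd$-many, and in any case cannot hold for all $h-1\ge 1$ elements of $H^*$ once $h-1\ge 2$. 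The remaining small case $h=2$ (so $s=1$, $H^*=\{\lambda_1\}$ a single element) must be checked by hand: there $|T|=q$ is automatically possible, so one has to verify directly that $N=q$ together with the single vanishing condition and $B$ forces $A_1=A_2=0$ — but then $A_1\neq 0$ was assumed, contradiction, so actually even here no bad triple survives. I would package the $h\ge 4$ and $h=2$ arguments uniformly by noting that $L(\lambda)=0$ for two distinct $\lambda,\lambda'\in H^*$ already yields $A_1=0$ (subtracting the two linearized conditions eliminates the $A_2$-contribution and leaves $A_1(\lambda^{-q/2}-\lambda'^{-q/2})$ times a nonzero linearized expression, forcing $A_1=0$), and that when $h=2$ one falls back to the $A_1=0$ analysis after a separate direct computation.

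The main obstacle I anticipate is the bookkeeping in rewriting the exponent as a genuine affine $2$-polynomial over $\gf(q)$ so that Lemma \ref{lem-p-polynomial} applies cleanly: one must carefully track how $\tr_{q/2}(A_1\lambda^{-q/2}\beta^{q/2}y^{q/2})$ contributes, via the trace identity $\tr_{q/2}(u^{q/2})=\tr_{q/2}(u^{2^{m-1}})=\tr_{q/2}(u)^{2^{m-1}}=\tr_{q/2}(u)$ is not quite right over $\gf(2)$ — rather $\tr_{q/2}(z^2)=\tr_{q/2}(z)$ — so the $y^{q/2}$ term can be absorbed into the linear part, turning $f_\lambda$ effectively into an \emph{affine linearized polynomial of $2$-degree $0$ or $1$} and reducing Lemma \ref{lem-p-polynomial} to the statement ``$\sum_y\chi_1(cy+d)$ is $0$ unless $c=0$''. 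Once this normalization is done correctly, the dichotomy ``inner sum $\in\{0,\pm q\}$'' and the ``at most one $\lambda$'' counting are routine, and the case split on $A_1=0$ versus $A_1\neq 0$ finishes the proof.
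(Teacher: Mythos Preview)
Your overall framework coincides with the paper's: expand $N$ by orthogonality, apply Lemma~\ref{lem-p-polynomial} to the inner sum over $y$, and deduce that a certain condition $L(\lambda)=0$ must hold for every $\lambda\in H^*$ (together with $\tr_{q/2}(A_2)+B=0$). The substantive gap is in your final step, where you assert that $L(\lambda)=0$ has ``at most one solution'', or that subtracting two instances immediately forces $A_1=0$. Neither is true. After the paper's normalization (square $L(\lambda)=0$ twice and clear denominators) the condition becomes the \emph{quadratic}
\[
A_2^2\lambda^2+A_1^2\lambda+A_1^4\beta^2=0,
\]
which can have two roots; and subtracting the instances for distinct $\lambda_1,\lambda_2$ yields only $A_1^2=A_2^2(\lambda_1+\lambda_2)$, not $A_1=0$. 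The key idea you are missing is to exploit that $H$ is an \emph{additive subgroup}: if $\lambda_1\neq\lambda_2$ lie in $H^*$ then so does $\lambda_1+\lambda_2$, and summing the three quadratics for $\lambda_1,\lambda_2,\lambda_1+\lambda_2$ collapses in characteristic~$2$ to $A_1^4\beta^2=0$, whence $A_1=0$ since $\beta\neq 0$. After that, $A_2=0$ and $B=0$ follow as you indicate.

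Two further remarks. First, your polynomial bookkeeping is off: in $f_\lambda(y)$ the coefficient of $y^{q/2}$ is $A_1\lambda^{-q/2}\beta^{q/2}$ and that of $y$ is $A_1\lambda^{-q/2}+A_2$, not the reverse. Second, your attempt to rescue the case $h=2$ is circular, and in fact the lemma as literally stated is \emph{false} for $s=1$: with $H^*=\{\mu\}$, the triple $A_1=1$, $A_2=\mu^{-q/2}+\mu^{-1}\beta$, $B=\tr_{q/2}(A_2)$ satisfies $N=q=q(h-1)$. Both the paper's argument and any correct version of yours tacitly require $|H^*|\ge 2$, i.e.\ $s\ge 2$, which is the only range used downstream.
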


\begin{proof}
If $(A_1,A_2,B)=(0,0,0)$, then it is clear that $N=q(h-1)$.

In the following, we assume that $N=q(h-1)$. Our goal is to prove $(A_1,A_2,B)=(0,0,0)$.
Let $\chi$ be the canonical additive character of $\gf(q)$. By the orthogonality relation of additive characters, we have
\begin{eqnarray}\label{proof-eqn-1}
\nonumber N&=&\sharp\left\{(\lambda,y)\in H^*\times \gf(q):\tr_{q/2}\left(A_1\lambda^{-\frac{q}{2}}\beta^{\frac{q}{2}}y^{\frac{q}{2}}+(A_2+A_1\lambda^{-\frac{q}{2}})y+A_2\right)+B=0\right\}\\
\nonumber&=&\frac{1}{2}\sum_{\lambda\in H^*}\sum_{y\in \gf(q)}\sum_{z\in \gf(2)}(-1)^{z\left(\tr_{q/2}\left(A_1\lambda^{-\frac{q}{2}}\beta^{\frac{q}{2}}y^{\frac{q}{2}}+(A_2+A_1\lambda^{-\frac{q}{2}})y+A_2\right)+B\right)}\\
&=&\frac{q}{2}(h-1)+\frac{(-1)^B}{2}\sum_{\lambda\in H^*}\sum_{y\in \gf(q)}\chi(A_1\lambda^{-\frac{q}{2}}\beta^{\frac{q}{2}}y^{\frac{q}{2}}+(A_2+A_1\lambda^{-\frac{q}{2}})y+A_2).
\end{eqnarray}
By Lemma \ref{lem-p-polynomial},
\begin{eqnarray}\label{proof-eqn-2}
\nonumber \Omega(A_1,A_2)&:=&\sum_{\lambda\in H^*}\sum_{y\in \gf(q)}\chi(A_1\lambda^{-\frac{q}{2}}\beta^{\frac{q}{2}}y^{\frac{q}{2}}+(A_2+A_1\lambda^{-\frac{q}{2}})y+A_2)\\
&=&q\chi(A_2)N_\lambda,
\end{eqnarray}
where
$$N_\lambda:=\sharp \left\{\lambda\in H^*:A_1\lambda^{-\frac{q}{2}}\beta^{\frac{q}{2}}+(A_2+A_1\lambda^{-\frac{q}{2}})^{\frac{q}{2}}=0\right\}.$$
By Equations (\ref{proof-eqn-1}) and (\ref{proof-eqn-2}), we have
$$q(h-1)=\frac{q}{2}(h-1)+\frac{(-1)^B}{2}q\chi(A_2)N_\lambda,$$
which is equivalent to
$N_\lambda=h-1$ and $\tr_{q/2}(A_2)+B=0$. For $\lambda\in H^*$, the equation
$$A_1\lambda^{-\frac{q}{2}}\beta^{\frac{q}{2}}+(A_2+A_1\lambda^{-\frac{q}{2}})^{\frac{q}{2}}=0$$
is equivalent to
$$\left(A_1\lambda^{-\frac{q}{2}}\beta^{\frac{q}{2}}+(A_2+A_1\lambda^{-\frac{q}{2}})^{\frac{q}{2}}\right)^2=0,$$ i.e.,
\begin{eqnarray}\label{eqn-2-1}
A_1^2\lambda^{-1}\beta+A_2+A_1\lambda^{-\frac{q}{2}}=0.\end{eqnarray} It is clear that Equation (\ref{eqn-2-1}) is equivalent to
$$\left(A_1^2\lambda^{-1}\beta+A_2+A_1\lambda^{-\frac{q}{2}}\right)^2=0,$$ i.e.,
$$A_1^4\lambda^{-2}\beta^2+A_2^2+A_1^2\lambda^{-1}=0,$$
which can be written as
$$A_2^2\lambda^2+A_1^2\lambda+A_1^4\beta^2=0.$$ Hence we have
$$N_\lambda=\sharp \left\{\lambda\in H^*:A_2^2\lambda^2+A_1^2\lambda+A_1^4\beta^2=0\right\}.$$
Let $\lambda_1,\lambda_2\in H^*$ and $\lambda_1\neq \lambda_2$ such that
\begin{eqnarray}\label{proof-eqn-3}A_2^2\lambda_1^2+A_1^2\lambda_1+A_1^4\beta^2=0,\end{eqnarray} and
\begin{eqnarray}\label{proof-eqn-4}A_2^2\lambda_2^2+A_1^2\lambda_2+A_1^4\beta^2=0.
\end{eqnarray}
Since $H$ is an additive subgroup, we have $\lambda_1+\lambda_2\in H^*$. Then $N_\lambda=h-1$ yields that
\begin{eqnarray}\label{proof-eqn-5}A_2^2(\lambda_1+\lambda_2)^2+A_1^2(\lambda_1+\lambda_2)+A_1^4\beta^2=0.
\end{eqnarray}
Combining Equations (\ref{proof-eqn-3}), (\ref{proof-eqn-4}) and (\ref{proof-eqn-5}) yields $A_1=0$ as $\beta\neq 0$. Then $A_2=0$ as $N_\lambda=h-1>0$ and $\lambda\in H^*$.
We also have $B=0$ as $\tr_{q/2}(A_2)+B = 0$.
According to the preceding discussions, we have proved $(A_1,A_2,B)=(0,0,0)$.
Then the desired conclusion follows.
\end{proof}

\begin{lemma}\label{lem-dimension}
Let $q=2^m$ with $m\geq 2$. Let $H$ be an additive subgroup of $\gf(q)$ with $h=|H|=2^s$ and $1< s <m$. Let $x^2+\beta x+1$ be irreducible over $\gf(q)$. Then the dimension of $\widetilde{\C}(\cA)^{(2)}$ is $2m+2$.
\end{lemma}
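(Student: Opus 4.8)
\textbf{Proof proposal for Lemma \ref{lem-dimension}.}
The plan is to compute the dimension of $\widetilde{\C}(\cA)^{(2)}$ as $q$-ary length of the trace representation map, by counting the number of parameter tuples $(a_1,a_2,b,c)$ that yield the zero codeword and showing this kernel has size $4$; then $\dim \widetilde{\C}(\cA)^{(2)} = \log_2\big(|\gf(q)|^2 \cdot |\gf(2)|^2 / 4\big) = 2m+2$. Concretely, a codeword in the representation of Lemma \ref{lem-trace} is zero if and only if
$$\tr_{q/2}(a_1\lambda^{-q/2}+a_2)+c = 0 \quad \forall \lambda\in H^*,$$
$$\tr_{q/2}\!\big(a_1\lambda^{-q/2}(y+\beta^{q/2}y^{q/2}+1)+a_2 y\big)+b+c = 0 \quad \forall \lambda\in H^*,\ y\in\gf(q),$$
$$\tr_{q/2}(a_1)+c = 0.$$
I would subtract the first family of equations from the second to eliminate the constant $\tr_{q/2}(a_1\lambda^{-q/2})$ and the additive constants, reducing the middle block to the condition that the number $N$ of pairs $(\lambda,y)\in H^*\times\gf(q)$ satisfying a trace equation of exactly the shape handled in Lemma \ref{lem-number} equals $q(h-1)$ (i.e.\ the equation holds for \emph{all} $(\lambda,y)$), with $A_1 = a_1$, $A_2 = a_2$, $B = b$. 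Lemma \ref{lem-number} then forces $(a_1,a_2,b) = (0,0,0)$. Feeding this back into the first and third equations gives $c = 0$ as well, so the kernel is trivial except for the all-parameter-zero tuple, hence has size exactly $4$ when we also account for... wait — once $(a_1,a_2,b,c)=(0,0,0,0)$ is the only solution, the kernel is $\{(0,0,0,0)\}$ of size $1$, and the image has size $q^2\cdot 2^2 = 2^{2m+2}$, giving dimension $2m+2$.

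More carefully: the trace-representation map sends $(a_1,a_2,b,c)\in\gf(q)^2\times\gf(2)^2$ (a group of size $2^{2m+2}$) to codewords of $\widetilde{\C}(\cA)^{(2)}$, and it is $\gf(2)$-linear, so $\dim\widetilde{\C}(\cA)^{(2)} = (2m+2) - \dim(\ker)$. Thus it suffices to show the map is injective, i.e.\ the kernel is $\{(0,0,0,0)\}$. The argument above does exactly this: the all-$\lambda$ validity of the three displayed equation families is equivalent, after the subtraction step, to ``$N = q(h-1)$ with parameters $(a_1,a_2,b)$'' together with $\tr_{q/2}(a_1)+c=0$ and $\tr_{q/2}(a_1\lambda^{-q/2}+a_2)+c=0$; Lemma \ref{lem-number} kills $(a_1,a_2,b)$, and then the remaining linear equations kill $c$.

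The one point requiring care — the main obstacle — is the bookkeeping in the subtraction step: one must verify that after eliminating the $\lambda$-dependent and $\lambda$-independent constants from the middle block, the residual trace argument is \emph{exactly} $A_1\lambda^{-q/2}(y+\beta^{q/2}y^{q/2}) + A_2(y+1) + B$ (up to relabelling), so that Lemma \ref{lem-number} applies verbatim. In particular one should check that the ``$y\mapsto y+1$'' and additive-constant terms are absorbed correctly and that ``the equation holds for all $(\lambda,y)$'' is genuinely equivalent to ``$N = q(h-1)$''. A subtlety worth noting is that the middle block being satisfied for all $y$ in particular at a fixed value of $y$ recovers the first block up to the shift by $b$; this consistency is what lets the constants cancel cleanly. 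Once the reduction is set up, the rest is immediate from Lemma \ref{lem-number} and elementary linear algebra over $\gf(2)$.
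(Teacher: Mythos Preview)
Your proposal is correct and is essentially the paper's own argument: reduce to Lemma~\ref{lem-number} by subtracting blocks of the trace equations, conclude $(a_1,a_2,b)=(0,0,0)$, and then read off $c=0$. The only cosmetic differences are that the paper phrases injectivity via two tuples $(a_1,a_2,b,c)$ and $(a_1',a_2',b',c')$ mapping to the same codeword (rather than computing the kernel directly), and it routes the cancellation through the third equation $\tr_{q/2}(a_1)+c=0$ as a pivot---subtracting it from both the first and second blocks before combining---whereas your single subtraction of block one from block two lands on the same identity $\tr_{q/2}\!\big(a_1\lambda^{-q/2}(y+\beta^{q/2}y^{q/2})+a_2(y+1)\big)+b=0$ in one step.
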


\begin{proof}
By Lemma \ref{lem-trace}, we assume that there exist four-tuples $(a_1,a_2,b,c)\in \gf(q)\times \gf(q)\times \gf(2)\times \gf(2)$ and $(a_1',a_2',b',c')\in \gf(q)\times \gf(q)\times \gf(2)\times \gf(2)$ such that
\begin{eqnarray*}
& &\myatop{\bigg(\left(\tr_{q/2}(a_1\lambda^{-\frac{q}{2}}+a_2)+c\right)_{\lambda\in H^*},\left(\tr_{q/2}(a_1\lambda^{-\frac{q}{2}}(y+\beta^{\frac{q}{2}}y^{\frac{q}{2}}+1)+a_2y)+b+c\right)_{\myatop{\lambda\in H^*,}{y\in \gf(q)}},}{\tr_{q/2}(a_1)+c\bigg)}\\
& &\myatop{=\bigg(\left(\tr_{q/2}(a_1'\lambda^{-\frac{q}{2}}+a_2')+c'\right)_{\lambda\in H^*},\left(\tr_{q/2}(a_1'\lambda^{-\frac{q}{2}}(y+\beta^{\frac{q}{2}}y^{\frac{q}{2}}+1)+a_2'y)+b'+c'\right)_{\myatop{\lambda\in H^*,}{y\in \gf(q)}},}{\tr_{q/2}(a_1')+c'\bigg).}
\end{eqnarray*}
Then we have
\begin{eqnarray}\label{eqn-1}
\tr_{q/2}\left((a_1+a_1')\lambda^{-\frac{q}{2}}+a_2+a_2'\right)+c+c'=0
\end{eqnarray}
for any $\lambda\in H^*$,
\begin{eqnarray}\label{eqn-2}
\tr_{q/2}\left((a_1+a_1')\lambda^{-\frac{q}{2}}(y+\beta^{\frac{q}{2}}y^{\frac{q}{2}}+1)+(a_2+a_2')y\right)+b+b'+c+c'=0
\end{eqnarray}
for any $(\lambda,y)\in H^*\times \gf(q)$, and
\begin{eqnarray}\label{eqn-3}
\tr_{q/2}(a_1+a_1')+c+c'=0.
\end{eqnarray}
Combining Equations (\ref{eqn-1}) and (\ref{eqn-3}) yields
\begin{eqnarray}\label{eqn-4}
\tr_{q/2}\left((a_1+a_1')(\lambda^{-\frac{q}{2}}+1)+a_2+a_2'\right)=0
\end{eqnarray}
for any $\lambda\in H^*$.
Combining Equations (\ref{eqn-2}) and (\ref{eqn-3}) yields
\begin{eqnarray}\label{eqn-5}
\tr_{q/2}\left((a_1+a_1')\left(\lambda^{-\frac{q}{2}}(y+\beta^{\frac{q}{2}}y^{\frac{q}{2}}+1)+1\right)+(a_2+a_2')y\right)+b+b'=0
\end{eqnarray}
for any $(\lambda,y)\in H^*\times \gf(q)$.
Then Equations (\ref{eqn-4}) and (\ref{eqn-5}) imply that
\begin{eqnarray*}
\tr_{q/2}\left((a_1+a_1')\lambda^{-\frac{q}{2}}(y+\beta^{\frac{q}{2}}y^{\frac{q}{2}})+(a_2+a_2')(y+1)\right)+b+b'=0
\end{eqnarray*}
for any $(\lambda,y)\in H^*\times \gf(q)$. By Lemma \ref{lem-number}, we deduce that $a_1=a_1'$, $a_2=a_2'$ and $b=b'$. Then Equation (\ref{eqn-3}) implies $c=c'$. Thus the dimension of $\widetilde{\C}(\cA)^{(2)}$ is $2m+2$.
\end{proof}

Denote by $\tilde{d}^{\perp}$ and $\tilde{d}^{(2)\perp}$ the minimal distances of $\widetilde{\C}(\cA)^{\perp}$ and $\widetilde{\C}(\cA)^{(2)\perp}$, respectively. The following theorem is the main result of this section.

\begin{theorem}
Let $q=2^m$ with $m\geq 2$. Let $H$ be an additive subgroup of $\gf(q)$ with $h=|H|=2^s$ and $1< s <m$. Let $x^2+\beta x+1$ be irreducible over $\gf(q)$. Then the dual $\widetilde{\C}(\cA)^{(2)\perp}$ of $\widetilde{\C}(\cA)^{(2)}$ is distance-optimal with respect to the sphere-packing bound and has parameters $$[2^{m+s}+2^s-2^m,2^{m+s}+2^s-2^m-2m-2,4].$$
\end{theorem}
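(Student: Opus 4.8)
The plan is to combine three facts: the length, the dimension, and the dual distance of $\widetilde{\C}(\cA)^{(2)\perp}$, and then to check distance-optimality against the sphere-packing bound. First I would record the length. By Theorem~\ref{th-2} the Denniston arc $\cA$ has $n = hq+h-q = 2^{m+s}+2^s-2^m$ points, and neither the augmentation in \eqref{defn} nor the passage to the binary subfield code changes the length, so $\widetilde{\C}(\cA)^{(2)}$ has length $n = 2^{m+s}+2^s-2^m$, and hence so does its dual. Next, Lemma~\ref{lem-dimension} gives $\dim \widetilde{\C}(\cA)^{(2)} = 2m+2$, so $\dim \widetilde{\C}(\cA)^{(2)\perp} = n - (2m+2) = 2^{m+s}+2^s-2^m-2m-2$. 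This pins down the first two parameters.

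The third step is to show $\tilde d^{(2)\perp}=4$. For the lower bound I would argue $\tilde d^{(2)\perp}\geq 4$ via Lemma~\ref{th-dualdistance}: it suffices to show $\tilde d^{\perp} \geq 4$, i.e.\ that the augmented Denniston arc code $\widetilde{\C}(\cA)$ over $\gf(q)$ has dual distance at least $4$. Equivalently, I would show that any three columns of the generator matrix $\widetilde G_{\cA}$ in \eqref{matrix-2} are linearly independent over $\gf(q)$: since the first three rows form a generator matrix of $\C(\cA)$ whose columns are the points of the arc $\cA$, and the fourth row is $\bone$, I would use that no three points of $\cA$ are collinear in $\PG(2,\gf(q))$ when $h=2$—but here $h=2^s$ with $s>1$, so three points of $\cA$ can be collinear. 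This is the subtle point: $\C(\cA)^{\perp}$ has minimum distance $3$ when $s>1$ by Theorem~\ref{th-2}, so the lower bound $\tilde d^{\perp}\geq 4$ must come from the augmentation. I would show that if $\bg_{i_1},\bg_{i_2},\bg_{i_3}$ are three collinear points of $\cA$, then although they are dependent in $\gf(q)^3$, appending the all-ones coordinate (the $\bone$ row) destroys the dependence: a nontrivial relation $\sum_j a_j \bg_{i_j}=\bzero$ with $\sum_j a_j = 0$ would have to hold simultaneously, and one checks using the explicit description of the lines meeting $\cA$ (three collinear points on a secant line, which meets $\cA$ in exactly $h=2^s \geq 4$ points) that the corresponding scalars cannot sum to zero; alternatively, since the secant line contains $2^s$ points of $\cA$ and $2^s$ is even, I would use a counting/parity argument on the restriction of a codeword of $\widetilde{\C}(\cA)^{(2)\perp}$ of weight $\leq 3$ to get a contradiction. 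For the upper bound $\tilde d^{(2)\perp}\leq 4$ I would exhibit an explicit weight-$4$ codeword in $\widetilde{\C}(\cA)^{(2)\perp}$, or simply note that $\widetilde{\C}(\cA)^{(2)}$ being an augmented code containing $\bone$ forces its dual to lie in the even-weight subcode generated in a way that admits a weight-$4$ word; concretely, four columns of $\widetilde G_{\cA}$ that are $\gf(2)$-linearly dependent can be found from two pairs of arc points lying on a common secant together with the $\bone$-row, giving a weight-$\leq 4$ dual word, and combined with the lower bound this is exactly $4$.

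Finally, I would verify distance-optimality. With $n = 2^{m+s}+2^s-2^m$, $k = n-2m-2$, $d=4$, the sphere-packing bound for a binary $[n,k,d']$ code with $d'=5$ would require $2^{n-k} = 2^{2m+2} \geq 1 + n + \binom{n}{2}$ (the size of a radius-$2$ Hamming ball), i.e.\ $2^{2m+2} \geq 1 + n + n(n-1)/2$. Since $n = 2^{m+s}+2^s-2^m > 2^m$ for $s>1$, we get $n(n-1)/2 > 2^{2m-1}$ and in fact $1+n+\binom n2 > 2^{2m+2}$ once $n$ is a bit larger than $2^{m+1}$, which holds for all $m\geq 2$ and $1<s<m$ after a short estimate (the worst case is $s=2$, $m=3$, which I would check directly). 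Hence no binary $[n,\,n-2m-2,\,5]$ code exists, so $\widetilde{\C}(\cA)^{(2)\perp}$ with minimum distance $4$ is distance-optimal. I expect the main obstacle to be the lower bound $\tilde d^{(2)\perp}\geq 4$: ruling out weight-$3$ dual codewords genuinely uses the augmentation (not just the arc property, which fails for $s>1$), so the argument must exploit the parity/evenness coming from $2^s$ points per secant line together with the all-ones row, rather than the MDS-type independence available only in the hyperoval case $s=1$.
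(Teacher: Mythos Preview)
Your overall architecture---length from Theorem~\ref{th-2}, dimension from Lemma~\ref{lem-dimension}, lower bound $\tilde d^{(2)\perp}\ge 4$ via Lemma~\ref{th-dualdistance}, and the sphere-packing bound for optimality---matches the paper exactly. The divergence, and the real gap, is in how you establish $\tilde d^{\perp}\ge 4$.

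You try to prove directly that any three columns of $\widetilde G_{\cA}$ are $\gf(q)$-independent, then correctly notice that three Denniston-arc points can be collinear when $s>1$, and propose to rescue this by arguing that for three collinear arc points $\bg_{i_1},\bg_{i_2},\bg_{i_3}$ the unique relation $\sum_j a_j\bg_{i_j}=0$ never has $\sum_j a_j=0$. That claim happens to be true (it is equivalent to $\tilde d^{\perp}\ge 4$), but your proposed justifications---``secant lines meet $\cA$ in $2^s$ points'' or a vague parity argument on weight-$\le 3$ words of the binary dual---do not actually pin it down; there is no obvious reason why the coefficients in a $3$-term dependency on a secant should avoid summing to zero.

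The paper sidesteps this entirely with a two-line argument you almost wrote down but applied in the wrong place. First, every column of $\widetilde G_{\cA}$ has last coordinate $1$, so two columns are proportional iff they are equal; since the top three rows list distinct projective points of $\cA$ (here one checks $y+\beta^{q/2}y^{q/2}+1\neq 0$ so that no two columns of $G_\lambda(y)$ collapse), all columns are distinct and $\tilde d^{\perp}\ge 3$. Second, $\bone\in\widetilde\C(\cA)$ by construction, so every codeword of $\widetilde\C(\cA)^{\perp}$ has even weight; hence $\tilde d^{\perp}\ge 4$. Then Lemma~\ref{th-dualdistance} gives $\tilde d^{(2)\perp}\ge 4$. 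This is exactly the ``parity coming from the all-ones row'' you anticipated, but used at the level of $\widetilde\C(\cA)^{\perp}$ over $\gf(q)$, not at the level of secant-line combinatorics.

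Finally, your separate ``exhibit a weight-$4$ word'' step is unnecessary: the sphere-packing inequality you invoke for optimality already rules out $\tilde d^{(2)\perp}=5$, so it simultaneously gives $\tilde d^{(2)\perp}\le 4$. The paper does just this, expanding $1+n+\binom{n}{2}$ with $n=2^{m+s}+2^s-2^m$ and observing that the resulting bound exceeds $2^{2m+2}$ for all $1<s<m$.
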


\begin{proof}
By Lemma \ref{lem-dimension}, the dimension of $\widetilde{\C}(\cA)^{(2)\perp}$ equals $n-2m-2=2^{m+s}+2^s-2^m-2m-2$.

Consider the matrix $G_\lambda(y)$ in Equation (\ref{matrix-1}), we claim that $y+\beta^{\frac{q}{2}}y^{\frac{q}{2}}+1\neq 0$ for any $y\in \gf(q)$. Otherwise, if $y+\beta^{\frac{q}{2}}y^{\frac{q}{2}}+1= 0$ for some $y\in \gf(q)$, we have
$y^2+\beta y+1=0$ for some $y\in \gf(q)$, which contradicts with our assumption that $x^2+\beta x+1$ is irreducible over $\gf(q)$. Hence, any two columns of the matrix $\widetilde{G}_{\cA}$ in Equation (\ref{matrix-2}) are different. Since the matrix in Equation (\ref{matrix-2}) is a parity-check matrix of $\widetilde{\C}(\cA)^{\perp}$, we deduce that $\tilde{d}^{\perp}\geq 3$. It is clear that $\widetilde{\C}(\cA)$ contains the codeword $(1,1,\cdots,1)$. Hence all the weights of $\widetilde{\C}(\cA)^{\perp}$ are even. Then we deduce that $\tilde{d}^{\perp}\geq 4$. By Lemma \ref{th-dualdistance}, we obtain that $\tilde{d}^{(2)\perp}\geq\tilde{d}^{\perp}\geq 4$.
By the sphere-packing bound of binary codes, we have
\begin{eqnarray}\label{eqn-bound}
2^{2^{m+s}+2^s-2^m}\geq 2^{2^{m+s}+2^s-2^m-2m-2}\sum_{i=0}^{\lfloor\frac{\tilde{d}^{(2)\perp}-1}{2}\rfloor}\binom{2^{m+s}+2^s-2^m}{i},
\end{eqnarray}
where $\lfloor x\rfloor$ denotes the floor function. Suppose that $\tilde{d}^{(2)\perp}=5$. Then Equation (\ref{eqn-bound}) becomes
\begin{eqnarray*}2^{2m+2}&\geq & 1+2^{m-1}+2^{s-1}+2^{2m+s}(2^{s-1}-1)+2^{m+s-1}(2^{s+1}-1)+2^{2m+1}+2^{2s-1}\\
&>&2^{2m+s}(2^{s-1}-1)\end{eqnarray*}
which is a contradiction as $s>1$. Hence $\tilde{d}^{(2)\perp}\leq 4$.
Then we deduce that $\tilde{d}^{(2)\perp}=4$ and the desired conclusion follows.
\end{proof}

Below we present an example of the codes treated before.

\vspace{.1cm}

\begin{example}
Let $m=5$ and $w$ be a generator of $\gf(2^5)^*$ with $w^5+w^2+1=0$. Let the subgroup $\cA$
of $(\gf(32), +)$ be $\{0, 1, w^{11}, w, w^2, w^5, w^{18}, w^{19}\}$. Hence, $h=|\cA|=8$. Then the Denniston arc
code $\C(\cA)$ over $\gf(32)$ has parameters $[232, 3, 224]$ and weight enumerator
$$
1 + 29667 z^{224} + 3100 z^{232}.
$$
Its dual $\C(\cA)^\perp$ over $\gf(32)$ has parameters $[232, 229, 3]$.

%The augmented code $\widetilde{\C}(\cA)$ over $\gf(32)$ has parameters $[232, 4, 8]$.
%Its dual $\widetilde{\C}(\cA)^\perp$ has parameters $[232, 228, 3]$.

The subfield code  $\widetilde{\C}(\cA)^{(2)}$ over $\gf(2)$ has parameters $[232, 12, 8]$.
Its dual $\widetilde{\C}(\cA)^{(2)\perp}$ has parameters $[232, 220, 4]$.
\end{example}

It is conjectured that $\widetilde{\C}(\cA)^{(2)}$ has minimal distance $h$ which is confirmed by our computer experiments. Note that although the binary code $\widetilde{\C}(\cA)^{(2)}$ has poor error-correcting
capability, the dual code $\widetilde{\C}(\cA)^{(2)\perp}$ is distance-optimal. The class
of codes $\widetilde{\C}(\cA)^{(2)\perp}$ achieves the first objective of this paper.

\subsection{A comparison of $\widetilde{\C}(\cA)^{(2)\perp}$ with the extended binary Hamming code}

The binary Hamming code has parameters $[2^m-1, 2^m-1-m, 3]$ and is perfect in the sense
that it meets the sphere-packing bound.
The extended binary Hamming code has parameters $[2^m, 2^m-m-1, 4]$ and is distance-optimal.
The information rate of this code is
$$
R_1:=\frac{2^m-m-1}{2^m}.
$$
The information rate of the binary code $\widetilde{\C}(\cA)^{(2)\perp}$ is
$$
R_2:=\frac{2^{m+s}+2^s-2^m-2m-2}{2^{m+s}+2^s-2^m}.
$$
Since the two codes have the same minimum distance $4$, we can compare their information rates.
When $s \geq 2$ and $m \geq 4$, it
can be verified that the information rate of $\widetilde{\C}(\cA)^{(2)\perp}$ is larger than
that of the extended binary Hamming code, i.e., $R_2>R_1$. Although both codes are distance-optimal, the code
$\widetilde{\C}(\cA)^{(2)\perp}$ developed in this paper is better than the extended binary
Hamming code in terms of their information rates. Hence, the class of binary codes $\widetilde{\C}(\cA)^{(2)\perp}$ are quite
attractive. In addition, the parameters of the class of binary codes $\widetilde{\C}(\cA)^{(2)\perp}$ look new.

\section{The class of optimal binary codes with parameters $[2^m+2, 2^m-2m, 6]$}\label{sec-june23}

In this section, we present our second class of distance-optimal binary linear codes which
will be constructed with a class of maximal arcs in $\PG(3, \gf(2^m))$.

\subsection{The construction of the binary codes}

Let $q$ be a prime power. It is known that the maximum number of points in an arc in
$\PG(3,\gf(q))$ is $q+1$ \cite{S}. The following lemma documents a known arc with $q+1$
points in $\PG(3,\gf(q))$.

\vspace{0.1cm}
\begin{lemma}\label{lemma-arc}\cite{Hirschfeld, HX}
For $q=2^m$ with $m \geq 2$. The set
$$\cA=\left\{(x^{2^h+1},x^{2^h},x,1):x\in \gf(q)\right\}\cup \{(1,0,0,0)\}$$ is an arc in $\PG(3,\gf(q))$ if and only if $\gcd(m,h)=1$.
\end{lemma}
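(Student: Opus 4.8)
My plan is to work directly from the definition: a point set in $\PG(3,\gf(q))$ is an arc exactly when no four of its points lie on a common plane. Write $Q(x)=(x^{2^h+1},x^{2^h},x,1)$, so that $\cA=\{Q(x):x\in\gf(q)\}\cup\{(1,0,0,0)\}$, and let a plane be $a_0X_0+a_1X_1+a_2X_2+a_3X_3=0$. I will show every plane meets $\cA$ in at most three points if and only if $\gcd(m,h)=1$, splitting according to whether $a_0=0$ (equivalently, $(1,0,0,0)$ lies on the plane) or $a_0\neq0$. Two elementary facts will be used repeatedly: $\gcd(2^h-1,2^m-1)=2^{\gcd(h,m)}-1$, and the identity $u^{2^h}+v^{2^h}=(u+v)^{2^h}$ valid in characteristic $2$.

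For the implication ``$\gcd(m,h)>1\Rightarrow\cA$ is not an arc'', put $e=\gcd(m,h)\geq2$. Since $e\mid m$ we have $\gf(2^e)\subseteq\gf(q)$, and since $e\mid h$ every $x\in\gf(2^e)$ satisfies $x^{2^h}=x$, whence $Q(x)=(x^2,x,x,1)$ for such $x$. All $2^e$ of these points, together with $(1,0,0,0)$, lie on the plane $X_1=X_2$; as $2^e\geq4$, this plane carries at least four points of $\cA$, so $\cA$ is not an arc.

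For the converse assume $\gcd(m,h)=1$, so that $\gcd(2^h-1,2^m-1)=1$. If $a_0=0$, a point $Q(x)$ is on the plane iff $a_1x^{2^h}+a_2x+a_3=0$; when $a_1\neq0$ this reads $L(x)=a_3$ for the $\gf(2)$-linear map $L(x)=a_1x^{2^h}+a_2x$, whose kernel consists of $0$ and the solutions of $x^{2^h-1}=a_2/a_1$, hence has size at most $\gcd(2^h-1,2^m-1)+1=2$; when $a_1=0$ there is at most one solution. So at most two points $Q(x)$, and hence at most three points of $\cA$, lie on such a plane. If $a_0\neq0$ we may take $a_0=1$; then four coplanar points of $\cA$ must all be $Q(x_1),\dots,Q(x_4)$ with the $x_i$ distinct, and this occurs iff the $4\times4$ matrix with rows $Q(x_i)$ is singular. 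The heart of the proof is to evaluate this determinant by characteristic-$2$ row and column operations: subtract the fourth row from the other three, expand along the last column, and use $x_i^{2^h}+x_4^{2^h}=(x_i+x_4)^{2^h}$ together with suitable column reductions to bring it, up to the nonzero factor $\prod_{i=1}^{3}(x_i+x_4)^{2^h+1}$, into the form
\[
\det\begin{pmatrix}1 & w_1 & w_1^{2^h}\\ 1 & w_2 & w_2^{2^h}\\ 1 & w_3 & w_3^{2^h}\end{pmatrix},\qquad w_i=\frac{1}{x_i+x_4},
\]
with $w_1,w_2,w_3$ pairwise distinct and nonzero. Subtracting the third row from the first two and expanding then gives $pq(p^{2^h-1}+q^{2^h-1})$, where $p=w_1+w_3$ and $q=w_2+w_3$ are nonzero with $p\neq q$. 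Were this quantity zero we would have $(p/q)^{2^h-1}=1$, forcing $p/q$ into the subgroup of $\gf(q)^*$ of order $\gcd(2^h-1,2^m-1)=1$, i.e. $p=q$, a contradiction. Hence the determinant is nonzero, no four $Q(x_i)$ are coplanar, and combining this with the case $a_0=0$ shows $\cA$ is an arc.

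The step I expect to be the main obstacle is the determinant computation in the case $a_0\neq0$: arranging the characteristic-$2$ row and column operations so that the Frobenius cancellations go through, and checking that the auxiliary quantities $x_i+x_4$, $w_i$, $p$, $q$ are all nonzero and pairwise distinct as needed — this is precisely where the distinctness of the $x_i$ and the hypothesis $\gcd(2^h-1,2^m-1)=1$ are consumed. The ``only if'' direction and the subcase $a_0=0$ are routine by comparison.
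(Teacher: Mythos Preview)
The paper does not prove this lemma; it is quoted from \cite{Hirschfeld,HX} without argument, so there is no ``paper's own proof'' to compare with. Your argument is correct and complete. The determinant reduction in the case $a_0\neq0$ goes through exactly as you outline: after subtracting the fourth row and clearing the last column one arrives at the $3\times3$ determinant with rows $(u_i^{2^h+1},u_i^{2^h},u_i)$ for $u_i=x_i+x_4$, and dividing row $i$ by $u_i^{2^h+1}$ produces the Moore-type determinant in $w_i=u_i^{-1}$ that you write down; the final factor $pq(p^{2^h-1}+q^{2^h-1})$ is then nonzero precisely because $\gcd(2^h-1,2^m-1)=1$. The case $a_0=0$ and the ``only if'' direction via the plane $X_1=X_2$ are handled cleanly. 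One cosmetic remark: your determinant computation actually shows that no four points $Q(x_i)$ are coplanar on \emph{any} plane, so strictly speaking the split by $a_0$ is only needed to treat the quadruples containing $(1,0,0,0)$; but your framing by planes is equally valid.
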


\vspace{0.1cm}

Let $\gf(q)=\{x_1,x_2,\cdots,x_q\}$. Let $\cA$ be the arc defined in Lemma \ref{lemma-arc} with $\gcd(m,h)=1$. Define
\begin{eqnarray}\label{eqn-generatormatrix}
G_{\cA}=\begin{bmatrix}x_1^{2^h+1} & x_2^{2^h+1} & \cdots &  x_q^{2^h+1} & 1\\
x_1^{2^h} & x_2^{2^h} & \cdots &  x_q^{2^h} & 0\\
x_1 & x_2 & \cdots &  x_q & 0\\
1 & 1 & \cdots & 1 & 0\\\end{bmatrix}.\end{eqnarray}
Let $\C(\cA)$ be the arc  code with the generator matrix $G_{\cA}$.

Combining Theorem \ref{th-arccode} and Lemma \ref{lemma-arc} directly yields the following result.

\vspace{0.1cm}

\begin{lemma}\label{lemma-parameters of arc code}
Let $q=2^m$ with $m \geq 2$. Let $\C(\cA)$ be the arc  code with the generator matrix $G_{\cA}$ defined in Equation (\ref{eqn-generatormatrix}). Then $\C(\cA)$ is a $q$-ary MDS linear code with parameters $[2^m+1,4,2^m-2]$. Its dual is an MDS code with parameters $[2^m+1,2^m-3,5]$.
\end{lemma}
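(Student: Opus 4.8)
The plan is to obtain this lemma as an immediate consequence of the two results just recalled: Lemma~\ref{lemma-arc}, which says that $\cA$ is an arc in $\PG(3,\gf(q))$ exactly when $\gcd(m,h)=1$, and Theorem~\ref{th-arccode}, which translates the arc property into the MDS property of the associated code.

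First I would verify that $\cA$ consists of exactly $q+1=2^m+1$ points of $\PG(3,\gf(q))$: the $q$ points $(x^{2^h+1},x^{2^h},x,1)$ with $x\in\gf(q)$ are pairwise distinct (they all have last coordinate $1$, and the third coordinate alone separates them), and $(1,0,0,0)$ differs from all of them since its last coordinate is $0$. Next I would note that the matrix $G_{\cA}$ in Equation~(\ref{eqn-generatormatrix}) is precisely the matrix of the form~(\ref{eqn-genmatA}) whose columns are these $q+1$ points, so $\C(\cA)$ is the code $\C_{\cA}$ of Theorem~\ref{th-arccode} with $r=3$ and $n=2^m+1$. Since $m\ge 2$ forces $n=2^m+1\ge 5>r+1=4$, the hypothesis $n\ge r+1$ is met, and since $\gcd(m,h)=1$ the set $\cA$ is an arc by Lemma~\ref{lemma-arc}. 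Theorem~\ref{th-arccode} then gives that $\C(\cA)$ is an $[n,r+1,n-r]=[2^m+1,4,2^m-2]$ MDS code over $\gf(q)$.

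For the dual, I would invoke the well-known fact that the dual of an MDS code is again MDS: an $[n,k,n-k+1]$ MDS code has a dual that is an $[n,n-k,k+1]$ code meeting the Singleton bound. With $n=2^m+1$ and $k=4$ this yields parameters $[2^m+1,2^m-3,5]$ for $\C(\cA)^{\perp}$, which finishes the argument. There is no genuine obstacle here; the only points requiring attention are matching the two descriptions of the generator matrix, checking $n\ge r+1$, and recalling the duality fact for MDS codes. (One should also keep in mind that the standing hypothesis $\gcd(m,h)=1$ is inherited from Lemma~\ref{lemma-arc} and is tacitly assumed in the statement.)
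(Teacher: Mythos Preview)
Your proposal is correct and matches the paper's approach exactly: the paper simply states that the lemma follows by combining Theorem~\ref{th-arccode} and Lemma~\ref{lemma-arc}, and you have spelled out precisely that derivation (together with the standard fact that duals of MDS codes are MDS). Your observation that the hypothesis $\gcd(m,h)=1$ is a standing assumption inherited from the setup is also accurate.
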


\vspace{0.1cm}

Now we consider the binary subfield code $\C(\cA)^{(2)}$ of $\C(\cA)$. The trace representation of $\C(\cA)^{(2)}$ is given in the following lemma.

\vspace{0.1cm}

\begin{lemma}\label{tracerepresnt-2}
Let $\cA$ be the arc defined in Lemma \ref{lemma-arc} with $\gcd(m,h)=1$. Let $\C(\cA)$ be the arc  code with the generator matrix $G_{\cA}$ defined in Equation (\ref{eqn-generatormatrix}). Then the trace representation of $\C(\cA)^{(2)}$ is given by
$$\C(\cA)^{(2)}=\left\{\left(\left(\tr_{2^m/2}(ax^{2^h+1}+bx)+c\right)_{x\in \gf(2^m)},\tr_{2^m/2}(a)\right):a,b\in \gf(2^m),c\in \gf(2)\right\}.$$
\end{lemma}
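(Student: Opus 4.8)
The statement to prove is Lemma~\ref{tracerepresnt-2}, the trace representation of the binary subfield code $\C(\cA)^{(2)}$ of the arc code $\C(\cA)$ over $\gf(2^m)$ whose generator matrix $G_{\cA}$ is displayed in Equation~(\ref{eqn-generatormatrix}). The plan is to apply Lemma~\ref{th-tracerepresentation} (the general trace-representation formula for subfield codes) directly to this particular generator matrix and then simplify the resulting expressions column by column. Since $G_{\cA}$ has $k=4$ rows, Lemma~\ref{th-tracerepresentation} tells us that a generic codeword of $\C(\cA)^{(2)}$ is obtained by choosing four coefficients $a_1,a_2,a_3,a_4\in\gf(2^m)$ and forming, in the $j$-th coordinate, the scalar $\tr_{2^m/2}\!\left(\sum_{i=1}^{4}a_i g_{ij}\right)$, where $g_{ij}$ is the $(i,j)$ entry of $G_{\cA}$.

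First I would evaluate $\sum_{i=1}^{4}a_i g_{ij}$ on the first $q$ columns. For the column indexed by $x\in\gf(2^m)$, the four entries are $x^{2^h+1}$, $x^{2^h}$, $x$, and $1$, so the inner sum is $a_1 x^{2^h+1}+a_2 x^{2^h}+a_3 x+a_4$. Here the key simplification is that $\tr_{2^m/2}(a_2 x^{2^h}) = \tr_{2^m/2}\!\left((a_2 x^{2^h})^{2^{m-h}}\right) = \tr_{2^m/2}(a_2^{2^{m-h}} x)$, using that the absolute trace is invariant under the Frobenius $t\mapsto t^2$ and hence under any power of it; this is where $\gcd(m,h)=1$ is not even needed — it is a pure trace identity. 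Combining the $x^{2^h}$ term with the $x$ term, the coordinate becomes $\tr_{2^m/2}(a_1 x^{2^h+1} + (a_3+a_2^{2^{m-h}}) x) + \tr_{2^m/2}(a_4)$. Renaming $a:=a_1$, $b:=a_3+a_2^{2^{m-h}}$, and $c:=\tr_{2^m/2}(a_4)\in\gf(2)$ puts the first $q$ coordinates in the claimed form $\tr_{2^m/2}(ax^{2^h+1}+bx)+c$. The last column of $G_{\cA}$ is $(1,0,0,0)^{\mathsf T}$, so the final coordinate is simply $\tr_{2^m/2}(a_1)=\tr_{2^m/2}(a)$, matching the claim.

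It then remains to check that the parametrization $(a_1,a_2,a_3,a_4)\mapsto(a,b,c)$ is surjective onto $\gf(2^m)\times\gf(2^m)\times\gf(2)$: indeed $a=a_1$ is arbitrary, $b=a_3+a_2^{2^{m-h}}$ is arbitrary for fixed $a_2$ by varying $a_3$, and $c=\tr_{2^m/2}(a_4)$ ranges over all of $\gf(2)$ since the absolute trace is onto. Hence the set of codewords of $\C(\cA)^{(2)}$ is exactly the set described in the lemma, and conversely every triple $(a,b,c)$ in that range arises from some $(a_1,a_2,a_3,a_4)$, so the two sets coincide. I do not anticipate a real obstacle here; the only point requiring a moment's care is the Frobenius-invariance step that lets us absorb the $x^{2^h}$ term into a linear term in $x$, and the bookkeeping that the change of variables is onto. (Note that, as in the earlier subfield-code computations in this paper, there is no claim about the dimension of $\C(\cA)^{(2)}$ at this stage — that is taken up separately — so the representation need not be an injective parametrization, only a correct description of the codeword set.)
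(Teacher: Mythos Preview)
Your proposal is correct and follows essentially the same route as the paper: apply Lemma~\ref{th-tracerepresentation} to $G_{\cA}$, use Frobenius invariance of the trace to rewrite $\tr_{2^m/2}(a_2x^{2^h})=\tr_{2^m/2}(a_2^{2^{m-h}}x)$, and then substitute $b=a_2^{2^{m-h}}+a_3$ and $c=\tr_{2^m/2}(a_4)$, checking surjectivity. You are in fact slightly more explicit than the paper about the $a_4\mapsto c$ step, and your remark that $\gcd(m,h)=1$ is not actually used here is correct.
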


\begin{proof}
Combining Lemma \ref{th-tracerepresentation} and Equation (\ref{eqn-generatormatrix}) yields the following trace representation of $\C(\cA)^{(2)}$:
\begin{eqnarray*}& &\C(\cA)^{(2)}\\
&=&\left\{\left(\left(\tr_{2^m/2}(a_1x^{2^h+1}+a_2x^{2^h}+a_3x)+c\right)_{x\in \gf(2^m)},\tr_{2^m/2}(a_1)\right):a_1,a_2,a_3\in \gf(2^m),c\in \gf(2)\right\}.\end{eqnarray*}
Note that
\begin{eqnarray*}
\lefteqn{ \tr_{2^m/2}(a_1x^{2^h+1}+a_2x^{2^h}+a_3x)+c } \\
&=&\tr_{2^m/2}(a_1x^{2^h+1})+\tr_{2^m/2}(a_2x^{2^h})+\tr_{2^m/2}(a_3x)+c\\
&=&\tr_{2^m/2}(a_1x^{2^h+1})+\tr_{2^m/2}(a_2^{2^{m-h}}x)+\tr_{2^m/2}(a_3x)+c\\
&=&\tr_{2^m/2}(a_1x^{2^h+1})+\tr_{2^m/2}\left((a_2^{2^{m-h}}+a_3)x\right)+c.
\end{eqnarray*}
Let $b:=a_2^{2^{m-h}}+a_3$. It is clear that $b$ runs through $\gf(2^m)$ with $2^m$ times if $(a_2,a_3)$ runs through $\gf(2^m)\times \gf(2^m)$. Then the desired conclusion follows.
\end{proof}

%vspace{0.1cm}

The dimension of $\C(\cA)^{(2)}$ is given in the following lemma.

\vspace{0.1cm}

\begin{lemma}\label{lemma-dimension}
Let $\cA$ be the maximal arc defined in Lemma \ref{lemma-arc} with $\gcd(m,h)=1$. Let $\C(\cA)$ be the arc  code with the generator matrix $G_{\cA}$ defined in Equation (\ref{eqn-generatormatrix}). Then the dimension of $\C(\cA)^{(2)}$ equals $2m+1$.
\end{lemma}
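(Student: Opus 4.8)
The plan is to show that the dimension of $\C(\cA)^{(2)}$ equals $2m+1$ by using the trace representation from Lemma \ref{tracerepresnt-2} and proving that the linear map
$$
(a,b,c) \longmapsto \left(\left(\tr_{2^m/2}(ax^{2^h+1}+bx)+c\right)_{x\in \gf(2^m)},\, \tr_{2^m/2}(a)\right)
$$
from $\gf(2^m)\times \gf(2^m)\times \gf(2)$ (viewed as an $\gf(2)$-vector space of dimension $2m+1$) to $\gf(2)^{2^m+1}$ is injective. Since the image is exactly $\C(\cA)^{(2)}$, injectivity gives $\dim_{\gf(2)}\C(\cA)^{(2)} = 2m+1$. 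Equivalently, I would suppose two triples $(a,b,c)$ and $(a',b',c')$ give the same codeword, set $A=a+a'$, $B=b+b'$, $C=c+c'$, and prove $A=B=0$, $C=0$.

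First I would use the last coordinate: equality there forces $\tr_{2^m/2}(A)=0$. Then the remaining $2^m$ coordinates give
$$
\tr_{2^m/2}(Ax^{2^h+1}+Bx) + C = 0 \quad \text{for all } x\in \gf(2^m).
$$
Setting $x=0$ yields $C=0$, so in fact $\tr_{2^m/2}(Ax^{2^h+1}+Bx)=0$ for every $x$. I would then translate this into a character sum: summing $\chi_1(Ax^{2^h+1}+Bx)$ over $x\in\gf(2^m)$ equals $2^m$ if the trace form vanishes identically, i.e. $S_h(A,B)=2^m$. The goal becomes showing $S_h(A,B)=2^m$ forces $A=B=0$.

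Here is where I would invoke Coulter's evaluation, Lemma \ref{lemma-coulter}, with $e=\gcd(m,h)=1$ (since $\gcd(m,h)=1$ by hypothesis). If $A\neq 0$, then for $m$ odd ($m/e=m$ odd) the value of $S_h(A,B)$ is $0$ or $\pm 2^{(m+1)/2}$, and for $m$ even it is $0$, $\pm 2^{m/2}$, or $\pm 2^{m/2+1}$; in all cases $|S_h(A,B)| \le 2^{m/2+1} < 2^m$ for $m\ge 3$, contradicting $S_h(A,B)=2^m$. (The small case $m=2$ forces $h$ odd and $\gcd(2,h)=1$, i.e. $h$ odd, and can be checked directly.) Hence $A=0$. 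With $A=0$ the condition reduces to $\tr_{2^m/2}(Bx)=0$ for all $x\in\gf(2^m)$, and nondegeneracy of the trace bilinear form forces $B=0$. Combining $A=B=C=0$ gives injectivity, hence the dimension is $2m+1$.

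The main obstacle is the bookkeeping of Coulter's lemma across the parity cases of $m/e$ and the sub-cases on whether $A$ is of the form $\alpha^{t(2^e+1)}$; the key point that makes it go through cleanly is simply the magnitude bound $|S_h(A,B)|\le 2^{(m+e)/2} \le 2^{m/2+1}$ whenever $A\neq 0$, which is strictly less than $2^m$ for all $m\ge 2$ when $e=1$ (noting $m\ge 2$ and $\gcd(m,1)$-type constraints make $m=2$ with $h$ odd the only borderline, still giving $2^{3/2}<4$). Thus no delicate case analysis of the exact sign or of the solvability of the auxiliary permutation equation is actually needed — only the crude upper bound.
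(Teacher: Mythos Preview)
Your argument is essentially identical to the paper's: both set up the injectivity of the parametrization $(a,b,c)\mapsto$ codeword, obtain $C=0$ from $x=0$, rewrite the remaining condition as $S_h(A,B)=2^m$ (equivalently $N(A,B)=2^m$ in the paper's notation), and then invoke Coulter's bounds with $e=\gcd(m,h)=1$ to rule out $A\neq 0$, finishing with nondegeneracy of the trace for $A=0$.

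One small correction: your treatment of $m=2$ is not right. For even $m$ the relevant Coulter bound is $|S_h(A,B)|\le 2^{m/2+1}$, which at $m=2$ equals $4=2^m$, not $2^{3/2}$; and indeed the map is \emph{not} injective there (for $h=1$ one has $\tr_{4/2}(x^3)\equiv 0$, so $(A,B,C)=(1,0,0)$ lies in the kernel), so the dimension claim actually fails for $m=2$. The paper's proof glosses over this case in the same way (``it is easy to verify that $N(A,B)<2^m$''), and the subsequent results (Lemma~\ref{lemma-parameters of dual} and Theorem~\ref{th-main2}) only assume $m\ge 5$, so nothing downstream is affected.
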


\begin{proof}
By Lemma \ref{th-tracerepresentation}, we assume that there exist three-tuples $(a,b,c)\in \gf(2^m)\times \gf(2^m)\times \gf(2)$ and $(a',b',c')\in \gf(2^m)\times \gf(2^m)\times \gf(2)$ such that
\begin{eqnarray*}
\lefteqn{\left(\left(\tr_{2^m/2}(ax^{2^h+1}+bx)+c\right)_{x\in \gf(2^m)},\tr_{2^m/2}(a)\right)} \\
&=&\left(\left(\tr_{2^m/2}(a'x^{2^h+1}+b'x)+c'\right)_{x\in \gf(2^m)},\tr_{2^m/2}(a')\right).
\end{eqnarray*}
This implies that
\begin{eqnarray}\label{system}
\left\{\begin{array}{ll}
\tr_{2^m/2}\left((a+a')x^{2^h+1}+(b+b')x\right)+c+c'=0\mbox{ for all }x\in \gf(2^m),\\
\tr_{2^m/2}(a+a')=0.
\end{array} \right.
\end{eqnarray}
Let $x=0$ in the first equation in System (\ref{system}), then we have $c=c'$. Hence
\begin{eqnarray}\label{system-1}
\left\{\begin{array}{ll}
\tr_{2^m/2}\left((a+a')x^{2^h+1}+(b+b')x\right)=0\mbox{ for all }x\in \gf(2^m),\\
\tr_{2^m/2}(a+a')=0.
\end{array} \right.
\end{eqnarray}
Denote
$$N(A,B)=\sharp \{x\in \gf(2^m):\tr_{2^m/2}(Ax^{2^h+1}+Bx)=0\},\ A,B\in \gf(q).$$
Let $\chi$ be the canonical additive character of $\gf(2^m)$. If $A=0,B\neq 0$, then $N(A,B)=2^{m-1}$. If $A\neq 0$, then
\begin{eqnarray*}
N(A,B)&=&\frac{1}{2}\sum_{y\in \gf(2)}\sum_{x\in \gf(2^m)}(-1)^{y\tr_{2^m/2}(Ax^{2^h+1}+Bx)}\\
&=&2^{m-1}+\frac{1}{2}\sum_{x\in \gf(2^m)}\chi(Ax^{2^h+1}+Bx)\\
&=&2^{m-1}+\frac{1}{2}S_h(A,B),
\end{eqnarray*}
where $S_h(A,B)$ is defined in Lemma \ref{lemma-coulter}.
Recall that $\gcd(m,h)=1$. By Lemma \ref{lemma-coulter}, if $m$ is odd, then
$$S_h(A,B)\in \{0,\pm 2^{\frac{m+1}{2}}\};$$ if $m$ is even, then
$$S_h(A,B)\in \{\pm 2^{\frac{m}{2}},\pm 2^{\frac{m}{2}+1}\}.$$
Then it is easy to verify that $N(A,B)<2^m$ always holds if $A\neq 0$. Based on these discussions, we deduce that $N(A,B)=2^m$ if and only if $A=B=0$. Thus System (\ref{system-1}) implies that $a=a',b=b'$. Since $a=a',b=b',c=c'$, the dimension of $\C(\cA)^{(2)}$ follows.
\end{proof}

Combining Lemmas \ref{th-dualdistance}, \ref{lemma-parameters of arc code} and \ref{lemma-dimension}, we obtain the parameters of the dual code $\C(\cA)^{(2)\perp}$ of $\C(\cA)^{(2)}$
which are described in the following lemma.

\begin{lemma}\label{lemma-parameters of dual}
Let $\cA$ be the arc defined in Lemma \ref{lemma-arc} with $\gcd(m,h)=1$ and $m\geq 5$. Let $\C(\cA)$ be the arc  code with the generator matrix $G_{\cA}$ defined in Equation (\ref{eqn-generatormatrix}). Then $\C(\cA)^{(2)\perp}$ is a binary linear code with parameters
$$[2^m+1,2^m-2m,d^{(2)\perp}\geq 5].$$
\end{lemma}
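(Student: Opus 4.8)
The plan is to combine three facts already established in the excerpt, so the argument is essentially a bookkeeping step rather than a new computation. First, by the definition of the subfield code, $\C(\cA)^{(2)}$ has the same length as $\C(\cA)$; since $\C(\cA)$ has length $2^m+1$ by Lemma \ref{lemma-parameters of arc code}, so does $\C(\cA)^{(2)}$, and hence so does its dual $\C(\cA)^{(2)\perp}$.

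Next, for the dimension: Lemma \ref{lemma-dimension} gives $\dim_{\gf(2)}\C(\cA)^{(2)} = 2m+1$, and therefore
$$\dim_{\gf(2)}\C(\cA)^{(2)\perp} = (2^m+1)-(2m+1) = 2^m-2m,$$
which is positive since $m\geq 5$. For the dual distance, I would invoke Lemma \ref{lemma-parameters of arc code} once more: the dual $\C(\cA)^{\perp}$ over $\gf(2^m)$ is an MDS code with parameters $[2^m+1,\,2^m-3,\,5]$, so its minimum distance $d^\perp$ equals $5$. Lemma \ref{th-dualdistance} then yields $d^{(2)\perp}\geq d^\perp = 5$, and putting the three observations together gives the claimed parameters $[2^m+1,\,2^m-2m,\,d^{(2)\perp}\geq 5]$.

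I expect no real obstacle in this lemma itself: the substantive work — the character-sum estimate via Coulter's theorem (Lemma \ref{lemma-coulter}) that underlies Lemma \ref{lemma-dimension}, and the MDS property of the arc code in Lemma \ref{lemma-parameters of arc code} — has already been carried out, and the only care needed here is to verify that the hypotheses $\gcd(m,h)=1$ and $m\geq 5$ place us in the range where those lemmas apply and where the dimension $2^m-2m$ is genuinely positive. The genuinely harder tasks, which I would expect to appear afterward, are to pin down the exact value of $d^{(2)\perp}$ and then to augment (or extend) the code so as to reach the announced $[2^m+2,\,2^m-2m,\,6]$ family in the section title.
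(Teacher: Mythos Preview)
Your proposal is correct and matches the paper's own approach exactly: the paper states that the lemma follows by combining Lemmas~\ref{th-dualdistance}, \ref{lemma-parameters of arc code}, and \ref{lemma-dimension}, which is precisely the three-step bookkeeping you carry out. (One minor remark: the hypothesis $m\geq 5$ is not actually needed here for the dimension to be positive---that holds already for $m\geq 3$---but it is inherited so that the lemma feeds directly into Theorem~\ref{th-main2}, where the sphere-packing bound needs it.)
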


Let $\C(\cA)^{(2)\perp}$ be the binary linear code defined in Lemma \ref{lemma-parameters of dual} and $\overline{\C(\cA)^{(2)\perp}}$ be its extended code. In the following, we give the parameters of $\overline{\C(\cA)^{(2)\perp}}$, which is the main result of this section.

\vspace{.1cm}

\begin{theorem}\label{th-main2}
Let $\cA$ be the arc defined in Lemma \ref{lemma-arc} with $\gcd(m,h)=1$ and $m \geq 5$. Let $\C(\cA)$ be the arc  code with the generator matrix $G_{\cA}$ defined in Equation (\ref{eqn-generatormatrix}). Then the extended code $\overline{\C(\cA)^{(2)\perp}}$ is distance-optimal with respect to the sphere-packing bound and has parameters
$$[2^m+2,2^m-2m,6].$$
\end{theorem}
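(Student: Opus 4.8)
The plan is to start from Lemma~\ref{lemma-parameters of dual}, which already gives us that $\C(\cA)^{(2)\perp}$ has parameters $[2^m+1,2^m-2m,d^{(2)\perp}]$ with $d^{(2)\perp}\geq 5$. Extending a code adds an overall parity check, so $\overline{\C(\cA)^{(2)\perp}}$ has length $2^m+2$ and the same dimension $2^m-2m$; its minimum distance $\overline{d}$ equals $d^{(2)\perp}$ or $d^{(2)\perp}+1$. The first task is therefore to pin down $d^{(2)\perp}$ exactly. I would argue $d^{(2)\perp}=5$: by Lemma~\ref{lemma-parameters of arc code} the MDS dual $\C(\cA)^\perp$ over $\gf(2^m)$ has minimum distance $5$, and by Lemma~\ref{th-dualdistance} we only get $d^{(2)\perp}\geq d^\perp=5$, so I need an upper bound. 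The natural route is to show that the binary parity-check matrix of $\C(\cA)^{(2)\perp}$ — equivalently a generator matrix of $\C(\cA)^{(2)}$, read off from the trace representation in Lemma~\ref{tracerepresnt-2} — has $5$ linearly dependent columns. One should be able to exhibit five points among $\{(x^{2^h+1},x^{2^h},x,1)\}\cup\{(1,0,0,0)\}$ whose images under the $\gf(2)$-linear expansion are dependent; a counting/character-sum argument using Lemma~\ref{lemma-coulter} (as in the proof of Lemma~\ref{lemma-dimension}) should show that not all sets of five columns can be independent once $m\geq 5$. Alternatively, since $\C(\cA)^{(2)}$ is even-like is not automatic here, I would check directly whether a weight-$5$ codeword exists in the dual by a dimension count: if every $5$-subset of columns were independent the code would be too good against the sphere-packing or Griesmer bound, giving a contradiction.

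Granting $d^{(2)\perp}=5$, the next step is to show the extension bumps the distance up to $6$, i.e. $\overline{d}=d^{(2)\perp}+1=6$ rather than staying at $5$. This happens precisely when $\C(\cA)^{(2)\perp}$ has no codeword of even weight $5$ — which is automatic, $5$ being odd — wait, more carefully: extending increases the minimum distance by one exactly when all minimum-weight codewords have odd weight. Since $d^{(2)\perp}=5$ is odd, every minimum-weight codeword has odd weight $5$, so appending the parity bit turns each into a weight-$6$ codeword, and no codeword of the extended code can have weight $5$ (weights in an extended binary code are all even). Hence $\overline{d}=6$. This part is essentially a one-line observation once the exact value $d^{(2)\perp}=5$ is in hand.

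Finally I would verify distance-optimality against the sphere-packing bound. With parameters $[2^m+2,\,2^m-2m,\,6]$ the redundancy is $2m+2$, and one must rule out a $[2^m+2,\,2^m-2m,\,7]$ binary code. For $d=7$ the sphere-packing bound requires
$$2^{2m+2}\geq \sum_{i=0}^{3}\binom{2^m+2}{i}=1+(2^m+2)+\binom{2^m+2}{2}+\binom{2^m+2}{3},$$
and the right-hand side is dominated by the $\binom{2^m+2}{3}\sim 2^{3m}/6$ term, which exceeds $2^{2m+2}$ for $m\geq 5$. A short estimate along these lines (mirroring the computation in the proof of the Section~\ref{sec-232} theorem) closes the argument. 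I expect the main obstacle to be the exact determination $d^{(2)\perp}=5$: Lemma~\ref{th-dualdistance} only yields the lower bound, so the upper bound needs a genuine argument — either explicitly producing five dependent binary columns from the arc, or a bounding argument showing $d^{(2)\perp}\leq 5$ is forced. Once that is settled, the passage to $\overline{d}=6$ and the optimality check are routine.
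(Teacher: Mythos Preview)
Your proposal contains the correct argument, but you have misidentified where the work lies. You spend most of your effort trying to establish $d^{(2)\perp}=5$ exactly, calling this the ``main obstacle,'' and your sketches for the upper bound $d^{(2)\perp}\leq 5$ (finding five dependent binary columns, a character-sum count, or a sphere-packing/Griesmer contradiction for the unextended code) are vague and in fact not easy to make work: for instance, the sphere-packing bound on $[2^m+1,2^m-2m,d]$ does \emph{not} force $d\leq 5$, since $\sum_{i\le 2}\binom{2^m+1}{i}\approx 2^{2m-1}<2^{2m+1}$.

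The point is that this step is entirely unnecessary, and the paper's proof bypasses it. You yourself write the key sentence: ``weights in an extended binary code are all even.'' From $d^{(2)\perp}\geq 5$ alone (Lemma~\ref{lemma-parameters of dual}) one gets $\overline{d}\geq 5$, and since $\overline{\C(\cA)^{(2)\perp}}$ is an even-weight binary code this immediately improves to $\overline{d}\geq 6$; no knowledge of whether $d^{(2)\perp}$ equals $5$, $6$, or more is required. Then your own sphere-packing computation (which is correct and is exactly what the paper does) shows that a $[2^m+2,2^m-2m,7]$ code is impossible for $m\geq 5$, giving $\overline{d}\leq 6$ and hence $\overline{d}=6$, together with distance-optimality. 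So the proof is three lines: lower bound $\geq 5$ from the lemma, parity to reach $\geq 6$, sphere-packing for $\leq 6$. Drop the detour through the exact value of $d^{(2)\perp}$ and your argument is complete and matches the paper.
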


\begin{proof}
Let $\overline{d^{(2)\perp}}$ denote the minimal distance of $\overline{\C(\cA)^{(2)\perp}}$. It follows from Lemma \ref{lemma-parameters of dual} that $\C(\cA)^{(2)\perp}$ has parameters
$$[2^m+1,2^m-2m,d^{(2)\perp}\geq 5].$$ Since its extended code $\overline{\C(\cA)^{(2)\perp}}$ has only even Hamming weights, we deduce that $\overline{\C(\cA)^{(2)\perp}}$ has parameters
$$[2^m+2,2^m-2m,\overline{d^{(2)\perp}}\geq 6].$$ By the sphere-packing bound, we have
$$2^{2^m+2}\geq 2^{2^m-2m}\sum_{i=0}^{\left\lfloor\frac{\overline{d^{(2)\perp}}-1}{2}\right\rfloor}\binom{2^m+2}{i},$$
i.e.,
$$2^{2m+2}\geq \sum_{i=0}^{\left\lfloor\frac{\overline{d^{(2)\perp}}-1}{2}\right\rfloor}\binom{2^m+2}{i}.$$
Since $m\geq 5$, it is easy to deduce that $\overline{d^{(2)\perp}}\leq 6$. Thus $\overline{d^{(2)\perp}}= 6$. The desired conclusion follows.
\end{proof}

\begin{example}
Let $m=5$. Then the arc code $\C(\cA)$ over $\gf(32)$ has parameters $[33, 4, 30]$ and
weight enumerator
$$
1+ 169136z^{30} + 32736z^{31} + 508431z^{32} + 338272z^{33}.
$$
The subfield code  $\C(\cA)^{(2)}$ over $\gf(2)$ has parameters $[33, 11, 12]$ and is
distance-optimal.
The dual code $\C(\cA)^{(2)^\perp}$ over $\gf(2)$ has parameters $[33, 22, 5]$ and is
almost distance-optimal.
The extended code $\overline{\C(\cA)^{(2)^\perp}}$ over $\gf(2)$ has parameters $[34, 22, 6]$
and is distance-optimal.
\end{example}

\subsection{A comparison with the extended double error correcting codes}

The double error-correcting binary BCH code has parameters $[2^m-1, 2^m-1-2m, 5]$.
Its extended code has parameters $[2^m, 2^m-1-2m, 6]$. The information rate of this code is
$$
R_1':=\frac{2^m-2m-1}{2^m}.
$$
The information rate of the binary code $\overline{\C(\cA)^{(2)\perp}}$ is
$$
R_2':=\frac{2^{m}-2m}{2^m+2}.
$$
Thus the extended  double error-correcting binary BCH code has
almost the same information rate as that of the code $\overline{\C(\cA)^{(2)\perp}}$. Both codes
are distance optimal. In addition, the code  $\overline{\C(\cA)^{(2)\perp}}$
has new parameters. The reader is informed that a class of binary linear codes with
parameters $[2^m, 2^m-2m+1, 4]$ were reported in \cite{HD}.

\section{Concluding remarks}\label{sect-conlusion}

Optimal linear codes are very rare and precious. Optimal binary linear codes are rarer and
more precious. An interesting and difficult problem of
coding theory is to construct optimal codes, as it is much harder to construct optimal codes
in smaller fields. A more interesting and difficult problem
of coding theory is to construct optimal codes with new parameters.

The contribution of this paper is the two classes of distance-optimal binary linear codes
with respect to the sphere-packing bound. These codes are interesting, as their parameters
look new and they are distance-optimal. The two classes of arc codes employed in this paper are very special.
The Denniston arc codes over $\gf(2^m)$ are two-weight codes holding $2$-designs.
The maximal arc codes with parameters $[2^m+1, 4, 2m-2]$ are MDS codes over $\gf(2^m)$.
These codes were carefully selected, so that optimal binary linear codes have been obtained.
To obtain the optimal binary codes, we employed a combination of
coding techniques such as the subfield technique, the extension technique, and the
augmentation technique.

If two linear codes $\C_1$ and $\C_2$ over $\gf(q^m)$ are monomially equivalent, their
subfield codes $\C_1^{(q)}$ and $\C_2^{(q)}$ over $\gf(q)$ may not be equivalent. For
example, the class of maximal arc codes over $\gf(2^m)$ in \cite{WDT} are equivalent to
the class of Denniston arc codes, but their subfield codes over $\gf(2)$ are not
equivalent. To obtain linear codes over $\gf(q)$ with good parameters via the subfield
code technique,
one should select an extension filed $\gf(q^m)$ and a code over $\gf(q^m)$ carefully.

\end{document}